\def\ps@pprintTitle{%
    \let\@oddhead\@empty
    \let\@evenhead\@empty
    \def\@oddfoot{\footnotesize\itshape
         {\phantom{x}} \hfill\today}%
    \let\@evenfoot\@oddfoot
}
\newtheorem{theorem}{Theorem}
\newtheorem{proposition}[theorem]{Proposition}
\DeclareMathOperator*{\argmax}{argmax} 
\begin{document}

\begin{frontmatter}
\title{Computational Performance of Deep Reinforcement Learning to find Nash Equilibria\tnoteref{mytitlenote}}
\tnotetext[mytitlenote]{Christoph Graf and Claude Kl\"ockl gratefully acknowledge financial support from the Anniversary Fund of the Oesterreichische Nationalbank (OeNB), 18306. Furthermore, Christoph Graf acknowledges financial support from the Austrian Science Fund (FWF), J-3917. Johannes Schmidt and Claude Kl\"ockl   thank  the European  Research  Council  (‘reFUEL’  ERC-2017-STG  758149) for their financial support.}

\author[1]{Christoph Graf}
\ead{cgraf@stanford.edu}

\author[2]{Viktor Zobernig}
\ead{viktor.zobernig@gmail.com}

\author[2]{Johannes Schmidt}
\ead{johannes.schmidt@boku.ac.at}

\author[2]{Claude Kl\"ockl\corref{cor1}}
\cortext[cor1]{Corresponding author}
\ead{claude.kloeckl@boku.ac.at}

\address[1]{Program on Energy and Sustainable Development (PESD), Stanford University, Stanford, CA 94305-6072, United States}

\address[2]{Institute for Sustainable Economic Development, University of Natural Resources and Life Sciences, Vienna, Austria}

\begin{abstract}
We test the performance of deep deterministic policy gradient (DDPG)---a deep reinforcement learning algorithm, able to handle continuous state and action spaces---to learn Nash equilibria in a setting where firms compete in prices. These algorithms are typically considered ``model-free'' because they do not require transition probability functions (as in e.g., Markov games) or predefined functional forms. Despite being ``model-free'', a large set of parameters are utilized in various steps of the algorithm. These are e.g., learning rates, memory buffers, state-space dimensioning, normalizations, or noise decay rates and the purpose of this work is to systematically test the effect of these parameter configurations on convergence to the analytically derived Bertrand equilibrium. We find parameter choices that can reach convergence rates of up to 99\%. 
The reliable convergence may make the method a useful tool to study strategic behavior of firms even in more complex settings.
\end{abstract}
\begin{keyword}
Bertrand Equilibrium \sep  Competition in Uniform Price Auctions \sep Deep Deterministic Policy Gradient Algorithm \sep Parameter Sensitivity Analysis
\end{keyword}
\end{frontmatter}

\section{Introduction}

A fundamental challenge in many applications in economics and social sciences is to derive counter-factual outcomes of the world---a necessity to, e.g., analyze the impact of a policy intervention. For many applications, this task is challenging because of the lack of measurements on certain important variables, potential measurement errors, or because the mechanisms that lead to outcomes are complex and poorly understood. If the exact mechanism of how the world operates is unknown, experiments or causal statistical models both in combination with randomization may be successfully deployed to derive valid counter-factual outcomes. There exists, however, a subset of problems in which the mechanisms are precisely defined, e.g., auctions. Although, the auction platform is typically set up centrally and operates based on a pre-determined set of rules, participants engage with each other ``through'' the auction-mechanism. The auction-clearing for given offers and bids from participants, is typically trivial to solve---a fact that is usually not true for the derivation of market participants' strategies.\footnote{Note that the unconstrained static Bertrand equilibirum where players compete in prices can also be formulated in a two-stage setting where player submit prices to an auction that will give the award (served demand) to the player with the lowest offer or split the demand in case of a tie. Although, the two-stage setting seems unnecessarily complicated for these kind of problems it will become helpful when we transition to more complex strategic interactions between players, e.g., if they are capacity constrained.}

If large comprehensive data-sets were available, counter-factual outcomes of auctions could be evaluated using non-parametric structural estimation techniques \citep[see, e.g.,][]{GuPe00, Kast11, Caou21, Char21}. Alternatively, game theoretic models could be deployed to derive counter-factual outcomes. However, in practice these models may be computationally challenging to solve because of their two-stage nature. More precisely, firms competing in an auction rely on the outcome of the second stage, i.e., the auction-clearing.\footnote{In some cases, for example, if the only ``system-condition'' is to ensure that supply equals demand, these problems can be simplified by integrating the equilibrium condition into the first stage. However, for more complicated auction-clearing mechanisms, e.g., locational pricing markets \citep[see e.g.,][]{YaAd08, GrWo20} that is no longer the case.} A relatively new approach, that we put to the test in this paper is to use reinforcement learning to model auction outcomes. 

In this framework, the environment (auction-clearing) is hard-coded consistent with its actual mode of functioning. Market participants or players interact through the environment with each other using a combination of experimenting and learning. One advantage is that this approach does not require any data; it only requires the knowledge of how the auction is organized (market rules) and a description of the agents' objective functions as well as their (physical) capacity limits. The ``data'' is then constructed on-the-fly as a combination of the strategies generated by the agents and the corresponding auction-clearing results. On first sight, this procedure parallels agent-based models which are sometimes criticized for generating any result depending only on how the researchers designed their agents. Indeed, sometimes agents are modeled only superficially with arbitrary encoded decision rules. The fundamental difference in the reinforcement learning setting is that only mild assumptions on the agent's behavior are encoded, as for example, they are intending to maximize their profits given their available production capacity. The learning on how to operate in a certain environment is then obtained by a combination of randomly exploring the action space and learning from past experiences on what action performed well under what circumstances. 
Hence, in reinforcement learning an agent's behavior is not predefined but arises emergently from the interplay of environment, agents, and reward mechanism. 

At the metalevel, the reinforcement learning approach to find stationary points or equilibria seems to be a natural way of modelling behavior in complex environments. Given the inherent uncertainty of how competitors will operate in a certain environment, reinforcement learning may resemble the actual behavior of how humans or (ro)bots would sequentially figure out how to interact (compete) with each other. 
In particular, the most recent generation of deep reinforcement learning algorithms, has been successfully deployed to master strategic games such as
Poker \citep{sandholmpoker2019}, Go \citep{Go2016, ScAn20}, or Starcraft \citep{Starcraft2019}, and it appears that these algorithms consistently outperform humans in playing these games. 
However, to the best of our knowledge the current wave of novel deep learning algorithms has not yet been applied to study strategic behavior in auctions-based market such as e.g., electricity markets.

The purpose of this paper is to apply state-of-the-art deep reinforcement learning to learn strategic bidding in capacity constrained uniform price auctions.
Specifically, we employ the deep deterministic policy gradient (DDPG) algorithm to learn agent's behavior \citep{lillicrap2015continuous}. This algorithm is capable to tackle fully continuous state-action spaces. It is well-known that deep learning algorithms, to realize their full potential in any given problem domain, rely on a good choice of learning parameters. Therefore, this work aims to facilitate the understanding of the parameter settings critical for the performance of DDPG in uniform price auctions. We focus on a relatively simple economic problem that has an analytical solution at least in the static case which we use as a benchmark solution. More precisely, we analyze the effect of learning parameter settings on outcomes in a simple symmetric unconstrained Bertrand duopoly, i.e., where two firms compete in prices for a fixed level of demand. The Bertrand model can be analyzed in a one-stage setting following the argument that both firms would compete each other down to their marginal cost. However, it could also be viewed in a two-stage setting where the second stage is the market-clearing of a uniform price auction and in the first stage firms submit their offers. Because both firms are assumed to have unlimited capacity the auction-clearing will award the low-bidding firm to serve all the demand.\footnote{The auction mechanism is equipped with tie-breaking rules that determine what happens if both firms offer at the same price. In our case, both market participants will serve half of the demand in case of a tie.} The advantage of this setting is that it paves the way for more realistic cases as for example when firms are capacity constrained which will be our second case study. Needless to say, more sophisticated modeling assumptions, for example, accounting for firm's production portfolios (multi-unit auctions), sequential market interactions,\footnote{Short-term wholesale electricity markets are typically organized sequentially starting with the day-ahead market and allowing market participants to update their positions until (close to) real-time. For more details, see, e.g., \cite{ItRe16, GrQuIncDec20}, or \cite{GrQuCovid20}.} or including dynamic constraints that can be relevant in electricity market settings where conventional generators have non-convex production functions,\footnote{See for example \cite{Regu14}, \cite{JhLe20}, or \cite{GrQu20} on the importance on these issues.} are relatively easy to integrate into this framework. However, the purpose of this paper is to first understand the effect of the parameter settings in the reinforcement learning framework on outcomes. Therefore, we constrain ourselves at this time to study the two simple cases which we can compare to analytically derived equilibrium outcomes. Employing our developed algorithms to more complex and realistic problems in the electricity market context is ongoing work.

Our main findings are that 
(i) DDPG is a reliable tool to study competition through auctions in both capacity constrained and unconstrained settings, 
(ii) DDPG improves upon established $Q$-learning and policy gradient methods due to neither requiring discretization nor strong behavioral assumptions,
(iii) convergence rates can reach up to 99\% given optimal parametrization,
(iv) convergence depends critically on choice of memory buffer size and noise decay rate for the unconstrained case,
(v) convergence depends critically on choice of normalization method, learning rates and memory model in the capacity constrained case,
(vi) both, the choice of normalization method and memory model, impact the eagerness of agents to engage in competition and hence have to be seen as a modelling choice, even if not imposing strict decision rules. Our algorithm is open-source and accessible publicly through our Github repository.\footnote{\url{https://github.com/ckrk/bidding_learning}.} 

\cite{asker2021} and \cite{AER2020} are two recent contributions that pursue a similar goal, that is, to evaluate the effect of parameter setting on equilibrium outcomes in reinforcement learning methods. Both papers, deploy $Q$-learning \citep{watkins1989} methods which rely on a discrete state and action space. Furthermore, because all the outputs of different input (space, action) combinations must be stored in a so-called $Q$-table, discretization will always be coarse in practice given the limitations of computational resources. 
Instead, we use a deep-learning framework that allows us to approximate the $Q$-table and therefore enables us to work on a continuous state and action space. Moreover, we believe that scaling to larger numbers of features and players will not be possible for methods that exhibit exponential scaling in the number of features such as $Q$-learning, while deep reinforcement learning has been shown to work in complex environments.
An interesting comparison to our work in the electricity market context is \cite{lagopoplavskaya2020}, who deploy an approximate algorithm called fitted $Q$-iteration. Albeit not being a deep reinforcement learning approach in the strict sense, they use a similar approximating strategy that seems to allow them to use more complex state-spaces.

The remainder of the paper is structured as follows. 
In Section \ref{Subsec:GTtoLearn}, we provide a brief exposition of the established methods to represent strategic interactions including the discussion of analytical and computational methods. In Section \ref{sec:RL}, we give an in-depth discussion of reinforcement learning's recent development, that can be said to have experienced a recent paradigmatic shift towards deep reinforcement learning. We discuss both classical reinforcement learning, exemplified by $Q$-Learning, and deep reinforcement learning, exemplified by DDPG. In Section \ref{Sec:BenchmarkingScenario}, we start with an analytical analysis of our main benchmarking scenario, where we derive its range of static Nash Equilibria. Our benchmark scenario contains both the capacity constrained and unconstrained case. We go on in Section \ref{Sec:DDPG} to evaluate DDPG's performance relative to its known analytical solution. We accompany convergence results, with a thorough investigation of the involved learning parameters and variational analysis of the parameters. We give recommendation on choices of learning rates, memory, normalization, memory buffer size and noise decay rates. We wrap up our results in Section \ref{Sec:conclusion}.

\section{How to model strategic interactions? From game theoretic equilibrium models to reinforcement learning}
\label{Subsec:GTtoLearn}
Game theory (GT) aims to predict outcomes where players interact strategically with each other. The theories scope is wide ranging from economic applications, over biology to computer science.
Nonetheless, GT originated from the systematic study of various card and board games such as Poker and Chess. In a strikingly similar development, 
competitively playing algorithms 
for computer strategy games are one of the hallmark successes of reinforcement learning. 
The similarities are strengthened by some theoretical works that reformulate training problems into specific types of games \citep{schuurmans2016deep}.
This leads us to conclude that both fields have significant overlap despite there completely different methodological approach, thus making their comparison and interesting field of study.
In the following, we briefly review key ideas of the two distinct fields and aim to asses how both fields can complement each other.

\subsection{From Game Theory ...}
\label{Subsec:GT}
In order to predict outcomes in a strategic interaction among players GT relies on the notion of equilibrium. A game with a unique pure strategy equilibrium is regarded as strong evidence, that players will eventually reach this outcome. An equilibrium is constituted of strategies that are mutual best responses to each other, i.e., no player has an incentive to marginally deviate from her strategy given that the other players are playing their equilibrium strategies. In simple games, finding pure strategy equilibria involves typically a two-step procedure: (i) derive best responses of each player, and (ii) find set of overlapping best responses if they exist. More sophisticated games may be solved analytically (e.g., by deriving best response functions and finding their interaction) or they can be solved numerically involving non-convex optimization techniques \citep[see, e.g.,][]{GrWo20}. Best responses always exist but they are not necessarily unique and in fact best response functions can be complicated objects, that do not necessarily allow to derive equilibria.

The hardness of deriving equilibria depends strongly on the specificities of a game's formulation. Generally, derivable equilibria tend to be relatively common in single-round games. For instance, a single round of Cournot competition or Bertrand competition with differentiated products allow non-trivial closed form solutions if the functional forms defining costs and demand are favorable. However, adding dynamics to the one-shot game poses significant challenges. In infinite games or finite games with uncertain termination point meta-strategies, such as  e.g punishments of non-cooperators,---a strategy which would not make sense in a one-shot game---arise. The issue is even further complicated by the fact that the efficiency of such meta-strategies depends critically on model parameters such as  the discount rate which in practice is hard to measure. A whole class of theorems so-called ``folk-theorems'' derive results that can essentially enforce any of a game's static equilibria a
This makes the assessment of sequential strategies strongly dependent on predefined modelling choices which may impact the overall quality of the predicted outcomes. 

In fact, only highly stylized cases support unique equilibria for extended games. More commonly, we face situations with either infinitely many equilibria or no pure strategy equilibrium at all. Furthermore, for mixed equilibria existence results are known, but not necessarily a constructive way to compute them.

The lack of constructive equilibria for repeated games is an unsolved problem within GT. Due to the abundance of equilibria as predicted by folk theorems, this is strongly related to the question of equilibrium selection. Consequently, many heuristic approaches for equilibrium selection exist. Furthermore, refinements of the Nash equilibrium concept, such as e.g., the requirement of sub-game perfection, are often used to reduce the number of equilibria. Alas, in the frequent case of non-unique equilibria, there is no unified theory of equilibria discrimination that allows to decide which equilibria are more likely or advisable to be played. A systematic resolution of the issue does not seem imminent and is one of the key motivations for us to look beyond game theory towards solution concepts from outside of game theory in order to model mid- to long-term strategic interactions.
 
\subsection{... over Algorithmic Game Theory ...}

Classic Game Theory typically concerns itself with analytical derivations, existence, and uniqueness results of Nash Equilibria. It usually does not deliver a theory of suboptimal and out-of-equilibrium play. Out-of-equilibrium play occurs prior to all players arriving in a given equilibrium and external influences may require frequent ``reequilibrations.'' GT usually does not address whether the time scales spent in such an out-of-equilibrium play are short or long. Algorithmic game theory (AGT) addresses this issue by analyzing whether a given Nash equilibrium can be computed in finite or even short time scales \citep{roughgarden2010algorithmic}. 

For instance, this is already illustrated by Algorithmic Game Theory's analytical results that particular learning behavior assumptions are able to efficiently learn specific subsets of equilibria \citep[i.e., best response-dynamics converge to correlated equilibria, see e.g.,][]{BRconvergence2008,foster1997calibrated}, no-regret-dynamics converge to coarse correlated equilibria \citep{blum2008noregretdynamics}. Moreover, there are results that even Nash equilibria can be found efficiently for specific types of games such as potential games \citep{viossat2013}. Thus AGT adds the important quality criterion of computational efficiency to GT's toolbox. 

However, algorithmic game theories results stems from theoretical computer science with a complexity theoretic flavour, where frequently theoretical algorithms are used to reason about existence and no-go results. This means that AGT's notion of computational efficiency usually equals the complexity theoretic notion of polynomial time computable \citep{roughgarden2010algorithmic}.
AGT's algorithms are not necessarily always practically efficient or even constructive. As such AGT's approach refines the fixed-point theorem approach, but too offers constructive results only in particular cases.

Furthermore, we are not aware of comparably strong results for bidding games as we have at our disposal in the case of potential games. We therefore hope to supplement these results with the use of concrete state-of-the-art deep learning algorithms that are considered practically successful in the Reinforcement Learning community and critically asses their viability for learning equilibrium strategies.

\subsection{... to Reinforcement Learning}

GT relies on the concept of a Nash-Equilibrium to predict a given player's behavior. An alternative solution concept could be to simulate a players behaviour by providing a learning algorithm that extrapolates from past rounds to future rounds and recommends the next move. This move is not necessarily an optimal move, but it is the move that appears optimal given a certain set of experiences and learning parameters.

GT relies on mathematical optimization and the theory of metric spaces to identify mutual best responses, i.e., equilibria. Reinforcement learning (RL) essentially is a form of educated Monte-Carlo simulation, where randomly simulated moves are used as a basis for a statistical inference that determines what strategy is the best conditioned on a players history.

There exists a certain trade-off between both approaches here. More precisely, reinforcement learning algorithms can by design give no guarantee or certificate of optimality. However, reinforcement learning is versatile and does not require specific functional forms, like convexity, of the constraints or objective function, that are common in mathematical optimization. Learning algorithms can either converge naturally or convergence can be enforced trough hyper-parameters. The quality of such solutions may vary, especially if convergence is enforced. Nevertheless, once convergence is attained, a strategy is always singled out.

Moreover, the unresolved question of equilibrium selection is side-stepped by the use of learning algorithms. It could be argued that the equilibrium selection criterion of RL is in fact the learnability of a given equilibrium. Furthermore, due to reliance on Monte-Carlo simulation several runs of an algorithm may result in differing solutions. Dealing with variation within solutions of RL algorithms is certainly a challenge, but broadly, this can be seen as an analogue to the equilibrium selection problem encountered in GT.

\section{Modern reinforcement learning: From the $Q$-table to deep neural networks}
\label{sec:RL}

We devote the following section to a discussion of recent advances in the machine learning domain with a focus on on algorithms that we deem best suited to learn strategic bidding in uniform price auctions.

Machine learning is subdivided in supervised, unsupervised, and reinforcement learning. Unsupervised learning is completely independent of the problem structure making it versatile but also somewhat limited in scope. It is usually used rather for classification tasks in not fully understood problem domains and not fit for strategic analysis. Supervised learning in-turn relies on human experts to train algorithms. In supervised learning, experts discriminate exemplary ``training'' actions into good or bad. 
The algorithm then learns to mimic the provided discrimination into good and bad and is as such not independent of the modellers input. In our opinion, the third option, RL, is the most interesting approach for tackling strategic games. It relies on an exact specification of an reward mechanism, that trains players that seek to increase their respective reward. In many markets that rely on auctions---electricity markets are no exception to that---the market-clearing mechanism is known. That means, for an exogenous set of offer and bid curves for each market participant, the market-clearing results can be easily computed. The challenge though, is to derive how market participants will formulate the offer curves they submit to the market. Remember RL algorithms requires only to specify, i.e., hard-code, the reward mechanism. Given that, equilibrium offer and bid curves may be learned by iteratively submitting curves to the market-clearing mechanism and receiving a response, i.e., a per-round profit value conditional on my submitted curve and the curves submitted by my opponents.

\subsection{Reinforcement Learning}
We briefly review reinforcement learning's key concepts \citep[see, e.g.,][for additional details on the concept]{sutton1998}.

Reinforcement learning's goal is to determine an optimal action  given a certain state, while drawing only on the information of a sequence of past rewards.
In order to capture the inherent time-structure of RL we define a time index $t$.
Let us write the action at time $t$ as $a_{t}$, the state at time $t$ as $s_{t}$, and the reward received at time $t$ by
$R(a_{t},s_{t})$ depending on $a_{t}$ and $s_{t}$.
We do not specify $a$ and $s$ closer by intention, since their exact meaning depends heavily on the problem at hand. However, we point out that they are not necessarily integers or even single numbers, they could be vectors or chosen from an continuous interval. Similarly, infinite time sequential optimization tries to maximize a value

\begin{equation}
\label{eq:Value_over_time}
    V(s_{0}):= \sum_{t=0}^{\infty}\max_{a_{t}} \gamma^{t} R(a_{t},s_{t})
\end{equation}

that is essentially a time averaged discounted objective function, where $0 \leq \gamma < 1$ is a discount factor required to enforce \eqref{eq:Value_over_time}'s convergence. In infinite time sequential optimization, there are various tools such as dynamic programming to explicitly evaluate \eqref{eq:Value_over_time} if possible. Nonetheless, this can be a non-trivial task.
RL shares the same aim as the field of mathematical optimization, but can be considered an alternative methodological approach. Hence, the goal of RL is to converge towards

\begin{equation}
\argmax_{a} R(a,s),
\end{equation}

for all $s$ or at least approximately find $a_{t}$ such that

\begin{equation}
 \sum_{t=0}^{\infty}\gamma^{t} R(a_{t},s_{t}) \approx V(s_{0}).
\end{equation}

In RL an action is determined from a sequence of rewards and actions $a_{0}, \ldots ,a_{t}$ taken in specific states $s_{0}, \ldots ,s_{t}$ by means of a learning rule. Sequential optimization problems can be decomposed by means of the (infinite) Bellman equation as

\begin{equation}
\label{Eq: Bellman}
V(s_{0}):= \sum_{t=0}^{\infty}\argmax_{a_{t}} \gamma^{t} R(a_{t},s_{t})=
R(a_{0},s_{0}) +
\sum_{t=1}^{\infty}\argmax_{a_{t}} \gamma^{t} R(a_{t},s_{t})
=
R(a_{0},s_{0}) + \gamma V(s_{1}).
\end{equation}

Solving \eqref{Eq: Bellman} involves several well-known challenges.  First, infinite time horizon problems do require a so-called discount factor $\gamma$ to become solvable. The important modelling choice of $\gamma$ is common to all approaches relying on \eqref{Eq: Bellman} and not limited to RL at all. Similarly, game theoretic folk theorems \citep{folkthm1986} encounter the problem of solutions relying sensitively on the choice of $\gamma$. Second, if no closed form solution is known for \eqref{Eq: Bellman} computationally solving an infinite expression is not feasible. Therefore, it is common practice to solve relaxed versions of \eqref{Eq: Bellman} to obtain approximately optimal solutions. In essence, all reinforcement learning methods are specifications how to approximate $V(s_t)$ in \eqref{Eq: Bellman} from a given history of $t$ actions and states.

\subsection{$Q$-Learning}
The main model-free approach to estimate $V$ is the family of temporal difference learning algorithms. Its most well-known member is so-called Q-learning \citep{watkins1989}.
$Q$-learning's  core assumption is that the state-action space is discrete and that the discounted cumulative reward function $V(s_{t})$ can thus be represented as a table or matrix whose values contain the possible rewards originating $R(a_{t},s_{t})$ from any combination of $S$ states and $A$ actions. If such a table were available, finding optimal actions for a state reduces to reading of the maximal value in a column corresponding to the state of interest.

$Q$-learning's key strategy is to find such a matrix representation of $V(s_{t})$  through an update rule

\begin{equation}
\label{eq:qlearning_update}
Q_{t+1}(s_t,a_t) = 
(1-\alpha) Q_t(s_t,a_t) + \alpha [R(s_t,a_t) +\gamma \max_{a}Q_t(s_{t+1}, a)],
\end{equation}

where $Q$ is an initially arbitrary matrix of size $S \times A$, $\alpha$ is the so-called learning rate and $\gamma$ the discount factor. Repeated application of \eqref{eq:qlearning_update} will eventually lead to convergence towards $R(s_t,a_t) +\gamma \max_{a_t}Q_t(s_{t+1}, a_t)$ if $R$ remains stationary and all values of $Q_t(s_t,a_t)$ are updated alike. $R(s_t,a_t) +\gamma \max_{a_t}Q_t(s_{t+1}, a_t)$ is the first order approximation of the Bellman equation \eqref{Eq: Bellman}, hence under the assumptions of stationarity, ergodicity, and sufficiently many iterations the matrix $Q$ is expected to converge towards an approximation of $V$.

$Q$-learning usually selects the action that has the maximal value in the $Q$-table, but ergodicity is only ensured by sometimes playing a random move instead (chosen with probability $\epsilon$ the exploitation-exploration parameter). Typically, $\epsilon$ is decreased throughout the run-time of the algorithm until it reaches a lower threshold to allow for easier convergence the algorithm has run through many iterations. The speed of the decrease in $\epsilon$ is termed the decay rate.

Under ideal conditions $Q$-learning can thus reliable approximate the Bellman equation. However, $Q$-learning's fundamental prerequisites can be hard to ensure in practice. For instance, stationarity of rewards is a strong assumption in general, but also particularly when studying auctions. $Q$-learning is traditionally used by single agents. Especially, in the multi-agent case relevant for auctions, competing agents essentially form a dynamic state-space that makes it hard to ensure stationarity \citep{MARL2016learning,MARL2008}. Moreover, attaining ergodicity and short runtimes becomes harder to achieve the larger the state-action space gets. This means that fine discretizations of the state space leads to unfavourable scaling in the number of features and can lead to humongous state-action spaces even in relatively simple models. Furtherore, memory requirements constrain modelers to use rather coarse model descriptions in practice (see Table~\ref{Fig:RLstatespacesizes}). Aside from scaling considerations, we want to close this sections with a cautionary tale stemming from theoretical considerations in Operations Research. It is self-evident that discretization of state-spaces introduces rounding errors. Naturally, a discrete and continuous algorithm will be operating in slightly different scenarios due to their respective action spaces.
Hence, $Q$-learning will in-fact play a slightly distorted discretized version of the more general continuous game. The fineness of the discretization directly influences the number of ensuing equilibria. For instance, even straight-forward discretized Bertrand games are examples that exhibit 2--3 equilibria depending on the choice of discretization, while the continuous Bertrand game is actually unique \citep{asker2021}.

Sometimes discretization is justified with the perceived negligibility of small rounding errors. Nonetheless, discretization alters the underlying problem fundamentally. For instance, a discretized linear program becomes an integer linear problem whose solution can be arbitrary far away from the originally continous problem regardless of the possibly small magnitude of the rounding errors \citep[see IP Myth 1 in][]{ORmyths}. Hence, discretization is not necessarily a stable operation and can alter a problem's solution which may bias $Q$-learning's outcomes.

\begin{table}[ht]
\centering
\small
\caption{Comparison of Recent Articles applying $Q$-learning}
\begin{threeparttable}
  \begin{tabular}{lp{2.9cm}p{2.9cm}p{2.9cm}p{2.9cm}}
  \toprule
          & \cite{QlearningAncillary2020} & \cite{AER2020} & \cite{asker2021} & \cite{aliabadi2017agent}  \\
   \midrule
    Domain & Sequential Electricity Markets (Multi-unit auctions) & Bertrand Duopoly & Bertrand Oligopoly & Electricity Markets (Multi-unit auctions) \\ \addlinespace[5pt]
    Action Space Scope & Single Price-offer per unit  & Single Price-offer       & Single Price-offer      & Single Price-offer        \\ \addlinespace[5pt]
    Action Bounds & $\left[ 0, 10^{2} \right]$     &    Bertrand- to Monopoly prize   &$\left[ 0.1, 10 \right]$       & $\left[ 9, 45 \right]$        \\ \addlinespace[5pt]
    Action Stepsize & 1      & Scale-dependent       & 0.1       & 9-10       \\ \addlinespace[5pt]
    Action Space Size & $10^2$      & 15      & $10^{2}$      & $3\cdot4\cdot4=48$         \\ \addlinespace[5pt]
    State Space Scope & Marginal price, volume weighted average price, and total demand\tnote{1} & Single round Memory              & None       & None  \\ \addlinespace[5pt] 
    State Stepsize &  1     &  Scale-dependent     &    None   & None         \\ \addlinespace[5pt]
    State Space Size & $10^2$ ($\cdot$ Demand)& $15^2$       & 1      & 1        \\ \addlinespace[5pt]
     State-Actionspace Size & $10^4$      & $15^{3}$      & $10^2$       &  $48$       \\ \addlinespace[5pt]
    \bottomrule
    \end{tabular}%
    \begin{tablenotes}
        \footnotesize
            \item[1] Demand is only used in few scenarios with unspecified granularity. Graphs indicate possible granularities of 5 or 100.
        \end{tablenotes}
   \end{threeparttable}
  \label{Fig:RLstatespacesizes}%
\end{table}%

Table~\ref{Fig:RLstatespacesizes} gives a non-exhaustive overview of very recent reinforcement learning implementations of multi-agent reinforcement learning in economic settings. Table \ref{Fig:RLstatespacesizes} compares the relation between the papers area of study and the granularity of representation.
\textit{Domain} summarizes what scenario the paper describes. \textit{Action and State Space Scope} details how states and actions are modelled. 
Naturally, modelling choices have to be made. We describe those both for action and state spaces with respect \textit{Stepsize, Bounds and Space size}.
Bounds denote upper and lower limit of the modeled space. Size is the number of elements in the space. Finally, stepsize denotes the distance between individual elements.
We employ the convention, that we write a given states size as product of its individual constituents (i.e., $10 \cdot 10=10^{2}$ labels a space containing two distinct features with 10 possible values each).
Whether a given number of state-action pairs appears sufficient, depends on the desired accuracy and modelling domain.
The employed state-action spaces tend to reside in the order of $10^{-1}$--$10^{1}$ \citep{QlearningAncillary2020,AER2020}. Despite the relatively large magnitude of the composite state-action space, the resolution of indivdiual features remains relatively modest typically ranging from 1--10 units, unless trivial state spaces are involved. The number of distinct features (i.e., actions or state space elements) usually is less than between 3.
However, even slightly older papers can be found that exhibit action spaces of less then 100 elements \citep{aliabadi2017agent}.

Overall, we have discussed several challenges experienced within classical reinforcement learning exemplified by $Q$-learning.
Not all of these challenges can be expected to be easily overcome by deep learning.
For instance, non-stationary state spaces are a recognized problem of the notoriously hard multi-agent games that seem unavoidable from an algorithmic perspective and remain common to all reinforcement learning methods be it classical or deep. 

However, other issues can very well be adressed by deep learning. Consider, either $Q$-learning's problematic state-action space scaling or its possible discretization errors. Both are intimately tied to $Q$-learning naturally discreteeness and can be remedied through the application of modern deep learning methods. Although, there exist deep learning approaches for discrete problems as well \citep[for instance, DQN,][]{Atari2015DQN}, we will go on to elaborate on the possibilities of natively continuous algorithms in the following section.

\subsection{Deep Deterministic Policy Gradient}
\label{subsec:ddpg_lit}
Deep Deterministic Policy Gradient (DDPG) extends $Q$-learning by two key concepts: 
\textit{neural networks} and \textit{actor-critic} design.

First off, when opting for a neural network based design, we forfeit the idea to explicitly storing information about any possible state action combination.
This is desirable whenever the state-action space grows larger.
It is true, that some problems are well-representable with coarse discretizations and these are exactly those where $Q$-learning will perform well. Alas, many naturally occurring quantities such as temperature or quantities of energy are inherently continuous parameters. Even, essentially discrete parameters such as prices may require very fine discretizations if adressed in full-generality (i.e., down to cent scale). For such problem domains, neural network based designs allow to represent continuous as well as almost continuous domains easily.

If $Q(s_{t}, a_{t})$ grows too large to memorize and process efficiently, one could be satisfied with finding a parametrized family of functions that guarantees 

\begin{equation}
    \label{eq:NNapprox}
    \min_{\omega_{1},\ldots,\omega_{p}} \sum_{t}||C(\omega_{1},\ldots,\omega_{p},s_{t},a_{t}) - Q(s_{t},a_{t})||_{2} < \delta,
\end{equation}

for a non-linear function $C$, a sufficiently small $\delta$ and a fixed set of parameters $\omega_{1},\ldots,\omega_{p}$, where we require $p << S \cdot A$ (i.e., there are significantly less weights required, than needed for an exhaustive statespace description).
We call this neural network $C$ the \textit{Critic} network and $\omega_{1},\ldots,\omega_{p}$ its \textit{weights}.
\eqref{eq:NNapprox} states that $C$ is an approximation of the $Q$-table $Q$, that is in turn an approximation of the problems discounted cumulative reward function $V$ obtained from the Bellman equation.
This would allow us to store comparable amounts of information with much less parameters.
In principle, many function families can attain this, however neural networks have been proven to always fulfill \eqref{eq:NNapprox} if $p$ is large enough
by  the so-called universal approximation theorem \citep{Cybenko89}.

This allows us to make the approximation
\begin{multline}
\label{eq:CriticApprox}
(1-\alpha) Q(s_t,a_t) + \alpha [R(s_t,a_t) +\gamma \max_{a}Q(s_{t+1}, a)] 
\approx \\
(1-\alpha) C(\omega_{1},\ldots,\omega_{p},s_t,a_t) + \alpha [R(s_t,a_t) +\gamma \sup_{a}C(\omega_{1},\ldots,\omega_{p},s_{t+1}, a)],
\end{multline}
where we do only memorize a finite number of parameters $\omega_{1},\ldots,\omega_{p}$ and not necessarily the possibly many $a_{t}$ and $s_{t}$ arising in complex state-action spaces.
Moreover, a function allows for continuous outputs, thus paving the way for continuous state-action spaces.\footnote{Note, that \eqref{eq:CriticApprox}'s RHS is a proper generalization of its LHS, in the following sense:
The LHS is only sensible for discrete values of $a$.
In contrast, the RHS approximation of \eqref{eq:CriticApprox} makes sense for continuous $a$, but is equally sensible for discrete $a$. Consequently, both variants are applied within RL. 
If \eqref{eq:CriticApprox}'s $a$ is taken discrete the algorithm is called ``Deep $Q$-Learning (DQN)'', while for continous $a$ we call it a ``Deep Deterministic Policy Gradient (DDPG)'' algorithm. }

However, a further complication arises when moving to continuous action spaces. The evaluation of \eqref{eq:CriticApprox}'s left hand side is straight forward. In particular, it is easy to evaluate the maximum $\max_{a}Q(s_{t+1}, a)$ since $Q$ is just a finite table. In $Q$ each row (or columns) is simply a finite list, that can efficiently be sorted to retrieve its maximal element.
In contrast, if $a$ is interpreted as a continuous input parameter of the neuronal network $C$, we are faced with
$\sup_{a}C(\omega_{1},\ldots,\omega_{p},s_{t+1}, a)$, where we have to take a supremum over infinitely many values of $a$.
This is a non-trivial optimization problem instead of simple list sorting.

In order to circumvent this problem, we introduce another neural network $A$ that maps from the state-space to the action-space and may depend on another set of weights $\omega'_{1},\ldots,\omega'_{r}$. We call $A$ the \textit{Actor network}. 
This is only one possible design choice to solve the issue, but it is the hallmark of so-called \textit{actor-critic methods}.
Instead of a neural network one could assume a functional form of $A$ that allows for efficient solution of the maximization problem, but that would be more similar to so-called policy gradient methods. The idea of the neural network based approach is that by updating

\begin{equation}
\sup_{\omega'_{1},\ldots,\omega'_{r}}\sum_{t} C(\omega_{1},\ldots,\omega_{p},s_{t},A(\omega'_{1},\ldots,\omega'_{r},s_{t}) )
\end{equation}

iteratively the actor becomes an approximation of the function mapping states to ideal moves, thus

\begin{equation}
    \sup_{a}C(\omega_{1},\ldots,\omega_{p},s_{t}, a)
    \approx
    C(\omega_{1},\ldots,\omega_{p},s_{t}, A(s_{t})).
\end{equation}

Overall, we simplify \eqref{eq:CriticApprox} to

\begin{equation}
(1-\alpha) C(\omega_{1},\ldots,\omega_{p},s_t,a_t) + \alpha [R(s_t,a_t) +\gamma C(\omega_{1},\ldots,\omega_{p},s_{t}, A(s_{t}))],
\end{equation}

leaving us with the actor-critic update rule. If the actor network is chosen to output a single deterministic value  $a_{t}$ for each input state $s_{t}$, we obtain the update rule that is characteristic for a deep deterministic policy gradient (DDPG) learning algorithm.

\section{Price-competition in a Symmetric Capacity Constrained Duopoly}
\label{Sec:BenchmarkingScenario}
In order to test the performance of our algorithms to derive Nash equilibria in a strategic bidding context, we first discuss our benchmark scenario. We deliberately chose a simple framework that has an analytical (static) solution but at the same time captures the nature of price competition in a sealed bid uniform price auction setting. 
Overall, we aim to asses the performance of a modern continuous reinforcement learning algorithm (DDPG). Consequently, the benchmarking scenario should have a straight forward continuous formulation.
Moreover, we do not want to limit our analysis to the relative performance of our algorithm with any particular set of competing algorithms but we desire some form of absolute performance measure. Hence, we require a benchmarking scenario that is analytically solvable. We have kept the scenario simple so that all Nash-Equilibria can be computed by hand. This allows us to comment on every equilibrium and analyze how hard it is to find respectively.

We study the following scenario: Assume two identical players, $i$ and $j$, competing against each other in prices. Both players control a capacity $\overline{q}_i=\overline{q}_j=\overline{q}>0$ allowing them to commit to produce a homogeneous product. Both players have equal marginal cost $c_i=c_j=c \geq 0$. Players participate in a uniform price auction setting by submitting price offers. We write the price offers without loss of generality as $p_i \leq p_j \leq \overline{p}$. Offer prices are constrained by a price-cap ($\overline{p}$). 
If both players were submitting the same offer prices to the market, each player may sell one-half of the total demand (``tie breaking rule'').
We furthermore, assume inelastic demand $D$, that we require to be within $\overline{q} < D < 2\overline{q}$ to ensure non-trivial solutions. 

For any given set of parameters ($\overline{q},c,\overline{p},D$) and the tie break rule as described above we can compute an interval of Nash-Equilibria.
We calculate the Nash Equilibria in Proposition \ref{Thm:NE_characterization}.
There, we find and characterize a non-trivial range of equilibria that is well above the marginal costs of the players.\footnote{This is in so-far remarkable as our Scenario is a Bertrand competition model with capacity constraints and a homogeneous product. In their classic formulation as sequential game with pay-as-bid prices, these oligopoly models are known to lack pure Nash equilibria since they can exhibit indefinite out of equilibria dynamics such as Edgeworth cycles \citep{maskintirole1988}. In contrast, in our uniform price auction setting the scenario becomes tractable through the introduction of a maximal price. We want to emphasize that the influence of the maximal price is a specific effect of the uniform price auction setting, since pay-as-bid Bertrand competition's equilibrium strategy is to offer marginal costs. Above-marginal costs are only attained due to capacity limitations in the classic pay-as-bid case and a high enough maximum price should not influence the fact that Bertrand competition tends towards the marginal costs \citep[see][]{undercutproof}.}

This allows us to contrast numerical results against a well understood background. We have conducted all our numerical simulations with the following set of parameters, $\overline{q} = 50, c = 20, \overline{p}= 100$, and $D=70$, unless specified otherwise. We emphasize that the condition $\overline{q} < D < 2\overline{q}$ is crucially needed for the existence of non-trivial Nash equilibria. Alternatively, one may consider either the ``uncompetitive'' case $2\overline{q} \leq D$ or the ``ultra-competetive'' case $D \leq \overline{q}$. 

In the ``uncompetitive'' case all capacity is sold, regardless of the players behavior. Consequently, all strategies, where at least one player bids at the price cap are within equilibrium. 
We consider this case to be a trivial case, that we do not discuss further. 

In contrast, the ``ultra-competetive'' case $D \leq \overline{q}$ essentially renders the capacity constraint irrelevant and thus reduces to standard Bertrand competition, where both players bidding the marginal costs constitutes the equilibrium strategy in this case. Although, the predictions for the standard model of Bertrand competition are well-known ($p_i^* = p_j^* = c$), this case is not necessarily trivial when the goal is to learn this outcome through a black-box method. We believe that complying with game theories predictions for the special case of (non-capacity) constrained Bertrand competition is an important minimum requirement in order to trust a learning algorithm. Therefore, we include it in our discussion of the results later on.

If we would discretize the players action space, the two player condition allows us to easily visualize the games payoffs as a bimatrix . Moreover, this discretization matches the discrete action space of $Q$-learning. Computing Nash Equilibria can be done straightforward from the bimatrix. Each best response can be red of row-wise and column-wise. We find all equilibria precisely, where row- and column-wise best responses match.
In contrast, computing the equilibria of the continuous case requires a calculation. 
Our scenario is chosen precisely such that this calculation remains easy,
however in more general scenarios we have no guarantee of analytical solvability.

\begin{proposition}
\label{Thm:NE_characterization}
Consider an uniform price auction with two participants and eligible bids that may be placed in the interval $\left[-\infty, \overline{p} \right]$.
If both participants offer a fixed quantity $\overline{q}$ of an indistinguishable product, whose production involves a constant per unit production cost of $c$ and the demand $D$ lies within $\overline{q} < D < 2\overline{q}$, then there exist w.l.o.g.\ a number of tuples of equilibrium strategies $p_{L}<p_{H}$. The range of lower equilibrium offer prices being all $p_L$ with
\begin{equation}
\label{eq:NE_charakterization_low}
    p_{L} \leq (\overline{p} - c)\frac{(D-\overline{q})}{\overline{q}} + c, 
\end{equation}
and the unique higher equilibrium offer price $p_{H}$ being
\begin{equation}
\label{eq:NE_charakterization_high}
        p_{H}=\overline{p},
\end{equation}
is the auctions range  of Nash-Equilibria.
\end{proposition}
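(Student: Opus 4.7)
My plan is to enumerate pure-strategy best responses directly. First, I would write out the payoff structure implied by the uniform-price auction under the capacity condition $\overline{q} < D < 2\overline{q}$: for strictly ordered bids $p_i < p_j$ the low bidder sells $\overline{q}$, the high bidder sells the residual $D-\overline{q}$, and both are paid at the clearing price $p_j$, so their profits are $(p_j - c)\overline{q}$ and $(p_j - c)(D-\overline{q})$ respectively; a tie at $p_i=p_j=p$ splits demand evenly, giving each $(p-c)D/2$. Observe that $D<2\overline{q}$ is equivalent to $\overline{q} > D/2$, so the low bidder strictly prefers winning all of $\overline{q}$ to tying at the same clearing price, while $D > \overline{q}$ guarantees that the high bidder sells a strictly positive residual. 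These two strict inequalities are the only place the capacity condition enters.

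Second, I would fix a candidate pure equilibrium $(p_L, p_H)$ with $p_L < p_H$ and examine the high bidder's upward deviations $p' \in (p_H, \overline{p}]$. Each such deviation leaves the deviator as the high bidder but raises the clearing price to $p'$, strictly increasing its profit $(p'-c)(D-\overline{q})$. Hence any equilibrium must have $p_H = \overline{p}$, giving \eqref{eq:NE_charakterization_high}.

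Third, with $p_H = \overline{p}$ fixed, I would consider the high bidder's downward deviations. The most attractive is to strictly undercut with $p' < p_L$, which turns the deviator into the low bidder at clearing price $p_L$ and yields $(p_L - c)\overline{q}$. Equilibrium therefore requires $(\overline{p}-c)(D-\overline{q}) \geq (p_L - c)\overline{q}$, which rearranges exactly to \eqref{eq:NE_charakterization_low}; tying at $p_L$ is dominated by a slight undercut thanks to $\overline{q} > D/2$, so it does not add a separate binding constraint. From the low bidder's side, any deviation to $p' \in (-\infty, \overline{p})$ keeps the deviator as the low bidder with unchanged profit $(\overline{p}-c)\overline{q}$, and tying at $\overline{p}$ gives $(\overline{p}-c)D/2 < (\overline{p}-c)\overline{q}$, so the low bidder has no profitable deviation for any $p_L$ in the stated range. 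To confirm exhaustiveness I would finally rule out symmetric profiles $p_i = p_j = p$: if $p > c$, undercutting by $\epsilon$ yields $(p-c)\overline{q} > (p-c)D/2$ and is profitable; if $p \leq c$, overbidding to any $p' \in (\max\{p,c\},\overline{p}]$ becomes the high bidder and earns strictly positive profit against a non-positive baseline.

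The main obstacle, as I see it, is bookkeeping rather than anything conceptual: one has to be careful about the direction of inequalities, handle the tie-breaking rule so that the infimum over undercuts is attained in the limit rather than at the tie itself, and check that the condition $\overline{q} < D < 2\overline{q}$ is invoked in precisely the right places—once to make undercutting rather than tying the binding deviation for the high bidder, once to make winning strictly preferred to tying for the low bidder, and once to ensure the residual demand at the top of the order is strictly positive so that the high bidder actually has something to lose.
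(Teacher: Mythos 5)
Your proof is correct and follows essentially the same route as the paper's: pin down $p_H=\overline{p}$ via upward deviations of the residual (price-setting) bidder, derive \eqref{eq:NE_charakterization_low} as the no-undercutting condition for the high bidder, and verify the low bidder's indifference below the cap. Your explicit split of the tie case into $p>c$ (profitable undercut) and $p\leq c$ (profitable overbid) is in fact slightly more careful than the paper's one-line dismissal of ties, which implicitly assumes the tied price exceeds marginal cost.
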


\begin{proof}
Let us start by identifying the players best responses. We begin by case distinction, either $p_{L}=p_{H}$ or $p_{L} \neq p_{H}$ holds. In case of $p_{i}=p_{j}$ the tie breaking rule applies and each player receives $p_{L} \frac{D}{2}=p_{H} \frac{D}{2}$ Since, any player could improve his profit to $p_{H} \overline{q}$ by minimally lowering his offer and thus selling its whole capacity, bidding $p_{L}=p_{H}$ can not be a best response for any player. \\

Hence, without loss of generality we can assume $p_{L}<p_{H}$ (thus allowing us to interpret $L,H$ as low and high bidding respectively).
Let us consider the action of the high bidding player $p_{H}$. $p_{H}$ is a best response if player the high-bidder $H$ can neither gain by highering or lowering its offer. 

Since the higher player is the price setting player, we can write the profit of the high-bidding player as
$\pi_{H} = (p_{H}-c) (D-\overline{q}) \leq (\overline{p}-c)(D-\overline{q})$. Thus the higher bidding player can always increase its payoff by highering its offer, unless it bids at the price cap.
This makes bidding at the price cap the only candidate for a best response of the high bidding player. 

It remains to discuss, whether the high bidding player can improve its profit by lowering its offer. Lowering ones offer can only increase ones profit, if one undercuts the competition and thereby gains a large share of the market. This means that in order for $p_{H}=\overline{p}$ to be a best response the following equation has to hold:

\begin{equation}
\label{eq:undercut_proof}
(p_{L}-c) \overline{q} \leq (\overline{p}-c) (D-\overline{q}),
\end{equation}

otherwise the high bidding player might gain by undercutting the low bidding player. For the low-bidding player in-turn, all prices below the price cap yield the same profit and bidding at the price-cap leads to a tie that decreases its profit. Therefore, the low bidding player is too playing a best response. Hence, if equation \eqref{eq:undercut_proof} holds both player are playing a best response and thus a Nash equilibrium.

Therefore, the high bidding player is completely fixed, however the low playing player has a range of possible equilibrium strategies. To characterize these,
we go on to reformulate the above condition to

\begin{equation}
(p_{L}-c)  \leq (\overline{p}-c)\frac{(D-\overline{q})}{\overline{q}}.
\end{equation}

Now considering that $\overline{q},c,D$ and $\overline{p}$ are fixed by our assumptions, we have found a closed representation of $p_{L}$ being in equilibrium and the statement to be shown follows.
\end{proof}
We briefly comment on the implications of Proposition \ref{Thm:NE_characterization}. First, we see that the interval of equilibria depends on production costs, the ratio of sales and the price cap. While the dependence on price, demand and capacity seems very natural, the appearance of the price cap my strike an observer as unnatural. Although, almost all stock and electricity markets indeed have upper and lower offer limitations those are almost never reached in practice, thus one may question why the seemingly irrelevant price cap appears in the characterization of the equilibrium. \\

Indeed, this comes from our decision to have a very simple test scenario. In our, case the assumption of completely inelastic constant demand leads to the price cap being the only limiting factor on the bids of the high bidding players. In reality or a more sophisticated model, this limit would be set by the consumer or the demand function. Hence, the price cap should rather be seen as a proxy for a more involved demand function. Any demand function, would provide such an upper bound but at the very least this could be the monopoly price ensuring that the high bidding player is always bounded from above as required in the proof of Proposititon \ref{eq:undercut_proof}.

We want to briefly compare and contrast our result to the similar but different concept of undercut proof equilibria (UPE) \citep{undercutproof,shy2001}, where both players are playing undercut proof strategies.
These are usually found in an entirely different subject: Bertrand competition with differentiated products as opposed to the completely homogeneous product we are considering.
The concepts of undercut proof strategy and undercut proof equilibrium can be regarded as corresponding to the notions of best response and Nash equilibria respectively.
For differentiated products UPE's provide general existence and uniqueness results, that are not possible for Nash equilibria.
Our setting is different though and the similarity in methods are a coincidence.
There is a striking similarity between equation (\ref{eq:undercut_proof}) in the proof of Proposition \ref{Thm:NE_characterization} and the UPE, since  (\ref{eq:undercut_proof}) enforces that the high-bidding player has no incentive to offer lower than his lower bidding opponent. Hence, the low-bidding player plays an undercut-proof strategy that is simultaneously his best response. Conversely, the high-biding player is not playing undercut-proof strategy, but his best response instead. This means that in our benchmarking scenario from Proposition \ref{Thm:NE_characterization} the set of strategies (\ref{eq:NE_charakterization_low}) and (\ref{eq:NE_charakterization_high}) are not an undercut proof equilibrium but indeed a true Nash equilbirium.

\section{DDPG-Learning in Uniform Price Auctions}
\label{Sec:DDPG}

Deep Deterministic Policy Gradient learning (DDPG) is a successor of classical RL ($Q$-learning) allowing for continuous action and state spaces. Clearly, storing all function evaluations in a $Q$-Table becomes infeasible in a continuous setting. Therefore the $Q$-table is only approximated using neural networks. DDPG employs a so-called Actor-Critic design. Hence, actions and states qualities are evaluated by two differing neural networks: the Actor network and the Critic network respectively. For example in the bidding context, the actors submit offer prices and the critics collect the offers and past rewards in order to predict future hypothetical profits. 
Once all offers are submitted, the market clearing determines winners and losers of the auction and awards profits to the agents.
Agents process this information and update their neuronal networks to incorporate the new information gained by the current auction. 

DDPG is a well-tested and successful algorithm, that has performed well in control tasks in a range of continuous environments, (e.g., several robot motion control tasks \cite{zhang2019DDPG}, autonomous driving \cite{DDPG2018driving}).
In this section, we proceed to demonstrate that DDPG can also learn strategic behavior in continuous environments such as uniform price 
auctions. Moreover, we give a detailed presentation of our DDPG implementation and discuss the impact of relevant design choices such as learning rate, memory and normalization methods.

We deploy DDPG to explore its efficacy to find equilibria in a competitive bidding environment. There is not yet a standardized protocol that assess these complex algorithms and one of our contributions is a first framework to perform such an analysis. Furthermore, we regard DDPG as a particularly successful instance of the much larger family of deep learning algorithms. We believe that our results can be taken as exemplary for treating the much broader question of \textit{deep learning's efficacy for simulations of strategic bidding behavior}.

In order to test DDPG, we rely on the simple but non-trivial benchmarking example described in Section~\ref{Sec:BenchmarkingScenario}, with an analytical solution. This is important because it allows us to classify whether we successfully learned an equilibrium outcome. We interpret convergence towards an equilibrium outcome as a sign of successful learning. 
Finally, we perform statistics on the strategies the algorithm converged to after a fixed learning period. The basic benchmark scenario is kept constant in order to evaluate the influence of several parametrizations of the algorithm.

One of the main challenges of assessing reinforcement learning algorithms in general is their dependence on a number of user-defined ``hyper-parameters.'' Already for simple $Q$-learning, parameters such as learning rate, discount rate, and exploration-exploitation parameters will affect the algorithm's speed and convergence. When approximating the $Q$-function by a neural network, additional parameters, such as the memory buffer size, need to be chosen. The DDPG framework consists of two neural networks, hence, all the aforementioned parameters will affect solution quality and solution performance but the benefit of this approach is that it is able to treat continuous state-action spaces.

There is no generally agreed prescription on how to select hyperparameters, however, specific problem domains have distilled their experience into broad design guidelines for selecting these parameters. For instance, image recognition has nowadays been associated with convolutional neural network architectures (CNN's) and image generation is typically relying on generative adaptive networks (GAN's). Employing this kind of methods to market equilibrium problems is relatively new and therefore we can not draw on any such prescriptions. Our design philosophy, is to closely mimic the parametrization of the seminal paper that introduced continous actions-space deep learning \citep{lillicrap2015continuous}. We precisely list all relevant hyperparameters that are related to deep learning in Section~\ref{subsec: parameter description}. \cite{lillicrap2015continuous} uses several environments from the Multi-Joint dynamics with Contact (MuJoCo) library that simulates movement of rigid bodies, that are similar to the problems encountered in robotics. In Section~\ref{subsec: parameter variation}, we systematically vary the most relevant hyperparameters and report which changes have been beneficial to derive equilibrium outcomes in bidding problems. It is our hope to contribute thereby to building up similar domain specific recommendations for deep learning in energy systems analysis as is currently available for image recognition.

\subsection{Main Benchmark Scenario: Capacity Constrained Bertrand Competition}
We briefly recapitulate the structure of our benchmark scenario detailed in Section ~\ref{Sec:BenchmarkingScenario}. Our setting consists of a duopoly competing in offer prices. The competition takes place through an uniform price auction mechanism. The auction mechanism has the objective to maximize welfare given the firm's offer prices and returns a single (uniform) market-clearing price and market-clearing quantities for both players. We assume that both players have a fixed capacity and submit offer prices simultaneously to the auction mechanism. The static game has a continuum of equlibria that can be described analytically. In the learning context however, the one-shot game will be iterated many times and agents may learn how to optimally navigate in such a setting. 

We assume symmetry, i.e., both players are identical in terms of cost and maximum capacity (50) they control. Both players are assumed to always offer their full capacity and may submit offer prices in a continuous interval that we have normalized between $[-100,100]$. We assume a production cost per unit of $20$ and constant inelastic demand of $70$.\footnote{Most of these assumptions can be relaxed easily to model more realistic cases. However, the main purpose of our paper is to test specifications of the algorithms that is why we have chosen an analytically tractable bechmark scenario.}

We paramatrized our benchmark model such that demand is strictly larger than each firm's capacity. Hence, both firms will have positive market-clearing quantities. However, due to the uniform pricing auction setting, the marginal firm, i.e., the firm that will set the price, will produce less than the infra-marginal firm. Hence, the trade-off in this game is between setting the price but selling less, and being infra-marginal (not setting the price) but selling full capacity. Clearly, the latter strategy will dominate in terms of per-unit profit given one player is setting a high price and that both players are symmetric. We find that equilibrium strategies typically converge to states where one player is offering at the price cap and the other one far below---a behavior consistent with the analytical equlibrium description.

The advantage of deploying DDPG to the problem rather than $Q$-learning is that the former is able to handle continuous action spaces rather than only discrete actions spaces. Nonetheless, with the right parameterization, we achieve a reliable convergence of the DDPG algorithm despite the continuous action space. For instance, in Figure~\ref{fig:LearningExample} we depict the learning progression of a well-tuned DDPG run. 

DDPG also remains effective in finding equilibrium outcomes. Figures~\ref{Subfig:3norm_nomemory} and \ref{Subfig:3norm_memory} depict the share of 100 test runs that have terminated in a Nash Equilibrium state. The blue, green and red line describe runs with differing normalization schemes (see Section~\ref{Subsec:Normalization}), the height of these lines intersection with the orange Nash Equilibria border is the ratio of runs that converged to equilibrium within 15,000 episodes.\footnote{An episode corresponds to a single round of submitting bids and subsequent market clearing.}
There, we see that the best performing method (i.e., layer normalization, red line) achieves convergence rates of $100\%$ and $~99\%$ in the small and large state-space case respectively.
Therefore, it is evident from Figures~\ref{Subfig:3norm_nomemory} and \ref{Subfig:3norm_memory} that it is possible to attain reliable convergence in continuous state-action spaces with DDPG.

\begin{figure}[h]
     \centering  
     \includegraphics[width=\textwidth]{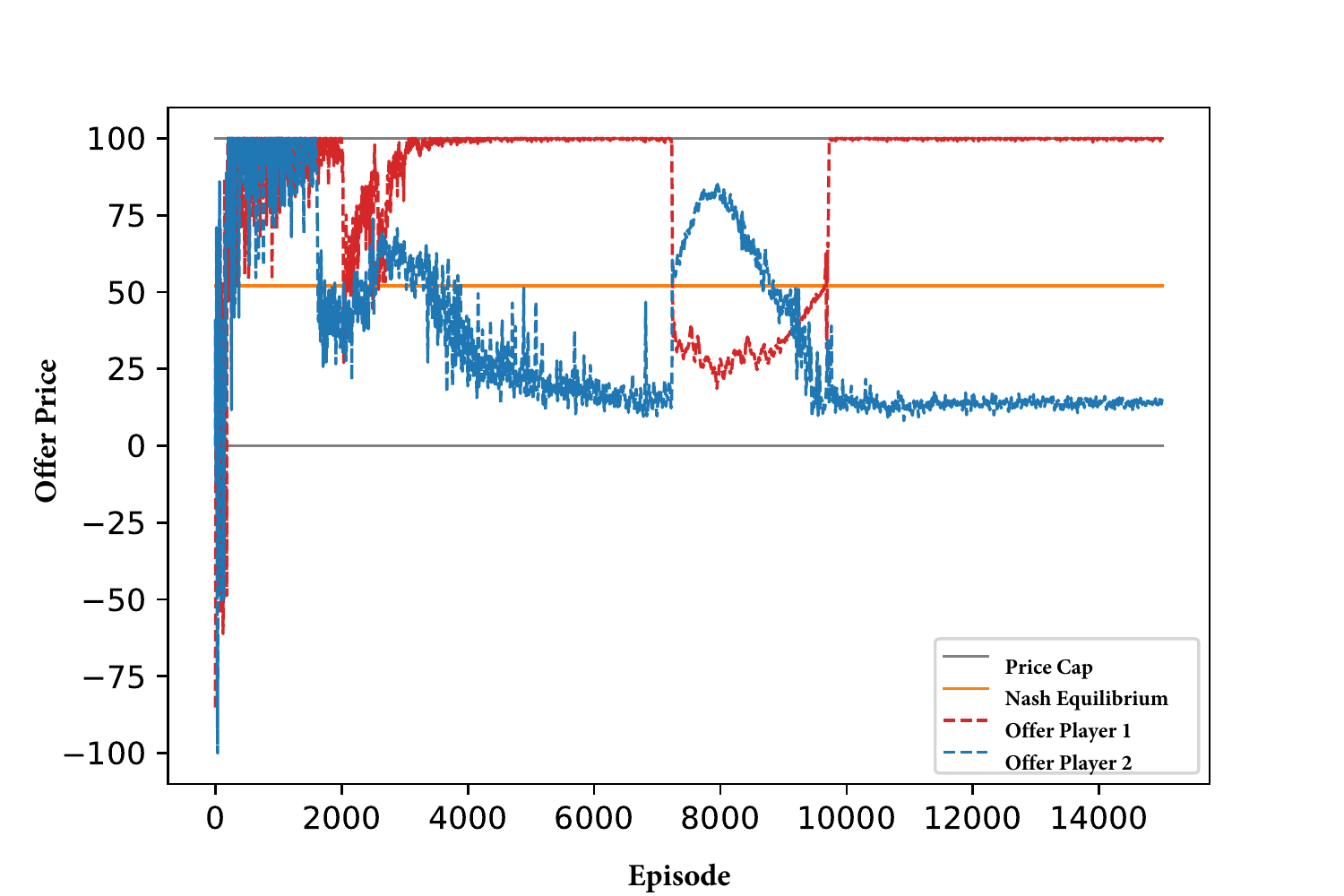}
          \caption{Exemplary learning progression}
        \floatfoot{\footnotesize \textit{Notes:} 
        We plot the development of offer prices in a duopoly.
        Each point corresponds to a players bid price in a given episode. The plot shows different phases of the learning progress throughout an exemplary DDPG run. We can distinguish:
        a) Episode 0-2000: Random exploration, exploration of action space limits
        b) Episode 2000-4000:
        Competition and exploration of opponents reactions
        c) Episode 4000-7000:
        Initial relaxation to equilibrium with high variance
        d) Episode 7000-9000:
        Attempted exploration of alternative strategies
        e) Episode over 9000:
        Final stabilization to equilibrium with low variance
        }
        \label{fig:LearningExample}
\end{figure}

\subsection{Alternative Benchmark Scenario: Unconstrained Bertrand Competition}

The ratio between demand and the agents capacities is a critical parameter. We have discussed the details in Section \ref{Sec:BenchmarkingScenario}, where we argued that the most interesting case is $\overline{q}<D<2\overline{q}$, as covered in our standard Benchmarking case. Nontheless, $0 < D\leq \overline{q}$ is worth a brief discussion, since it reduces to unconstrained Bertrand competition and hence the most well-known type of Bertrand duopoly. It is well-known that the only equilibrium in this case is the one, where both players bid their marginal costs.
Albeit, being of minor theoretical relevance, reproducing the behavior of Bertrand competition constitutes an important first validation of our algorithms agreement with established theory. Indeed, this seemingly ``easy'' case yields some interesting side results. 

In a slight departure from our standard benchmark case, we have performed a couple of algorithm runs implementing a standard Bertrand duopoloy with $\overline{q}_1=\overline{q}_2=D=50$ instead of our standard choice $\overline{q}_1=\overline{q}_2=50, D=70$. 

In essence, we find that DDPG converges towards marginial costs in the standard Bertrand duopoly as expected. However, there exists a caveat: the speed of convergence. Indeed, it is sometimes possible that DDPG's offers approach the marginal costs at a creeping speed leading to convergence times of roughly 100,000 episodes. 
Well parametrized runs of capacity constrained Bertrand duopolies typically do not exceed convergence times of 20,000 episodes.

Moreover, we were able to identify at least two criteria that control the convergence times of unconstrained duopolies: the size of the memory buffer \& noise decay rate. Indeed, this case makes it particularly apparent that the size of memory buffer needs to be chosen carefully. Both too large and too small buffer sizes negatively impact the convergence of the ``easy'' unconstrained problem.
Similarly, the noise decay needs to be adjusted, this finding is however expected (well-known as exploration-exploitation trade-off) and similar in all RL algorithms. 
We discuss the details in the Section~\ref{subsec:D=CAP}.

Finally, we want to remark that (even extremly) slow convergence to an equilibrium (such as marginal costs in unconstrained Bertrand competition) by no means contradicts any of game theories fundamental predictions. GT is not at all making any assumption \textit{how} or \textit{how fast} players arrive at equilibria. GT solely predicts that eventually players ought to arrive an remain within equilibrium. It may well be the case, that certain equilibria are relatively slow to learn despite having well-known theoretical justifications. 

\subsection{Tuning Parameters}
\label{subsec: parameter description}

DDPG is considered a so-called actor-critic method, thus it uses two neural networks (the actor and the critic network) in parallel with possibly differing design choices. Hence, DDPG requires a large set of parameters to be tuned.
A relevant contribution of our work is to evaluate sensible values and relevant parameters for the problem domain of uniform price auctions.
However, due to the many possible interactions between parameters a selection has to be made, which parameters are investigated in detail and which are held constant. 
We have deemed the following parameters to be most relevant and thus studied their significance in Section~\ref{subsec: parameter variation} through a variational analysis:
the \textit{actor learning rate},
the \textit{critics learning rate},
inclusion of \textit{last rounds actions in the state space},
several \textit{action normalization schemes}. 
While we held \textit{replay buffer size} fixed in the main benchmarking scenario (capacity constrained Bertrand competition), we performed a variational analysis of replay buffer size for the unconstrained Bertrand duopoly. Moreover, we have found significant differences between constrained and unconstrained Bertrand duopolies.
Indeed, in the unconstrained case we have found the \textit{replay buffer size} to be the decisive parameter alongside the \textit{noise decay rate}, although these have no been relevant for convergence in the constrained case.

Apart from these variational analysis, the following parameters, chosen almost entirely in accordance with \cite{lillicrap2015continuous}, have been held fixed: the choice of an almost empty \textit{state-space} except for the inclusion or exclusion of last rounds actions, the \textit{optimizer},  \textit{actor network design} including depth and hidden layer, \textit{critic network  design} including depth and hidden layer, \textit{soft-update rate}$\tau$, the \textit{noise}, and the use and size of a \textit{replay buffer}. We have summarized our choice of parameters in Table~\ref{Fig:TableParameters} and depicted the overall structure of our neural network architecture and the flow of reward-action-signals in Figure~\ref{fig:DDPGflow}.

\begin{figure}[htbp]
     \centering
     \caption{Depiction of the employed Multi-Agent Neural Network Architecture}
     \includegraphics[width=\textwidth,height=0.45\textheight,keepaspectratio]{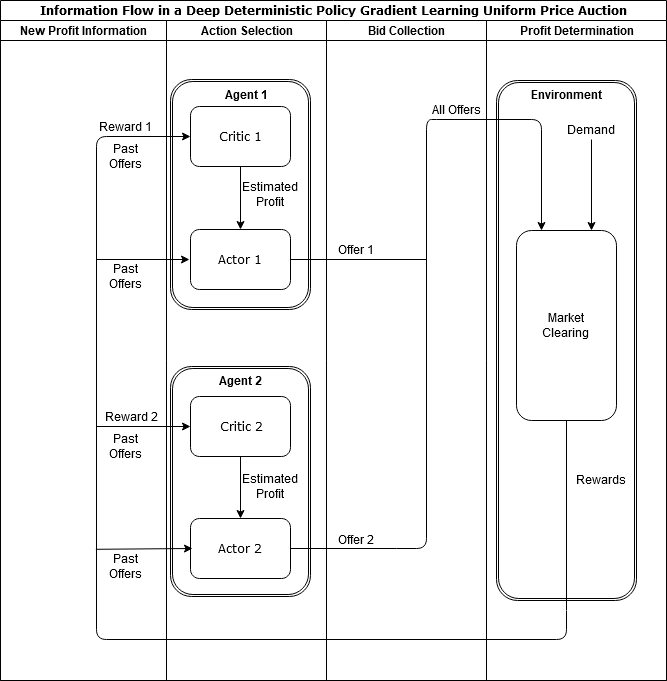}
    \floatfoot{\footnotesize \textit{Notes:} We depict the interaction between agents and environment in our uniform price auction scenario.
    Each Agent is equipped with two neural networks: the Critic and the Actor. In each rounds the critic processes the actual rewards and produces future reward estimates, that are in turn fed to the Actor that uses them to determine a price bid. In case of explicit state space memory, all neural networks receive last rounds bids as state-space information. All agents submit their bids to the environment. 
    The environment combines all submitted bids with demand information and a market clearing subroutine. The market clearing determines the market price, sold quantities and revenue. Revenue is turned into profit within the environment by subtracting costs. Here, profit translates to the reinforcment learning notion of reward.
    Finally, rewards and last turns bid are relayed back to the agents, who can use it to determine next rounds bids. This completes a single learning cycle. }
    \label{fig:DDPGflow}
\end{figure}

\begin{table}[h]
\centering
\caption{Model and Hyper Parameters}
\label{tab:1}
\vspace{0.1cm}
\begin{tabular}{lcllc}
\cline{1-2} \cline{4-5}
\multicolumn{2}{|c|}{\textbf{Model Environment Parameters}}              & \multicolumn{1}{l|}{} & \multicolumn{2}{c|}{\textbf{General Hyper   Parameters}}                  \\ \cline{1-2} \cline{4-5} 
\multicolumn{1}{|l|}{Capacities}            & \multicolumn{1}{c|}{50}    & \multicolumn{1}{l|}{} & \multicolumn{1}{l|}{Soft-update rate $\tau$}                  & \multicolumn{1}{c|}{1.00E-03} \\
\multicolumn{1}{|l|}{Marginal Costs}        & \multicolumn{1}{c|}{20}    & \multicolumn{1}{l|}{} & \multicolumn{1}{l|}{Discount rate $\gamma$}                & \multicolumn{1}{c|}{0.99}     \\
\multicolumn{1}{|l|}{Price Cap}             & \multicolumn{1}{c|}{1}   & \multicolumn{1}{l|}{} & \multicolumn{1}{l|}{Max Memory Size}      & \multicolumn{1}{c|}{50,000}    \\
\multicolumn{1}{|l|}{Demand}                & \multicolumn{1}{c|}{70}    & \multicolumn{1}{l|}{} & \multicolumn{1}{l|}{Batch Size}           & \multicolumn{1}{c|}{128}      \\ \cline{1-2}
                                            &                            & \multicolumn{1}{l|}{} & \multicolumn{1}{l|}{Total Runs}           & \multicolumn{1}{c|}{100}      \\ \cline{1-2}
\multicolumn{2}{|c|}{\textbf{Noise Hyper   Parameters}}                  & \multicolumn{1}{l|}{} & \multicolumn{1}{l|}{Episodes per Run}     & \multicolumn{1}{c|}{15,000}    \\ \cline{1-2} \cline{4-5} 
\multicolumn{1}{|l|}{Mean $\mu$}                    & \multicolumn{1}{c|}{0}     &                       &                                           & \multicolumn{1}{l}{}          \\ \cline{4-5} 
\multicolumn{1}{|l|}{Variance $\sigma$}                 & \multicolumn{1}{c|}{0.1}   & \multicolumn{1}{l|}{} & \multicolumn{2}{c|}{\textbf{Neural Network Hyper Parameters}}             \\ \cline{4-5} 
\multicolumn{1}{|l|}{Regulation Coefficient} & \multicolumn{1}{c|}{10}    & \multicolumn{1}{l|}{} & \multicolumn{1}{l|}{Learning Rate Actor}  & \multicolumn{1}{c|}{1.00E-04} \\
\multicolumn{1}{|l|}{Decay Rate}            & \multicolumn{1}{c|}{0.001} & \multicolumn{1}{l|}{} & \multicolumn{1}{l|}{Learning Rate Critic} & \multicolumn{1}{c|}{1.00E-03} \\ \cline{1-2} \cline{4-5} 
\end{tabular}
\label{Fig:TableParameters}
\end{table}

\subsubsection{State Space}
\label{Subsec: State Space Model}

The most fundamental modelling choice in any RL algorithm is how the state-action space is modelled. In our setting the action space is the set of possible offer prices that can be submitted to the market-clearing algorithm. In terms of the state space it is less clear what should be encoded in order to capture the dynamics of an environment. We believe that including richer state spaces is a very promising line of future research, but have decided to keep state information limited to keep the analysis tractable. Consequently, the state was chosen to be almost empty with one exception: the offer prices submitted to the market-clearing of the previous period. 
We emphasize that all environment parameters (i.e., capacity, marginal costs, price cap and demand) have been kept constant, but have not been included explicitly within the state-space. 
Hence, all information regarding the environment parameters is learnt emergently during the algorithm without input from our side.
Although, DDPG includes a parallel memory mechanism (the replay buffer), we opted to explicitly include this information in the state space. Indeed, we find that such an explicit representation of memory impacts the resulting behavior.

\subsubsection{Optimization Routine}
Deep learning algorithms rely on solving numerous successive optimization problems when updating neural networks, hence the choice of solver is relevant for the algorithms behavior.

We have employed the ADAM solver \citep{adam}. ADAM is not only used in \cite{lillicrap2015continuous} but can currently be considered as the state-of-the-art implementation of stochastic gradient descent that is employed throughout the deep learning community. ADAM's popularity comes from the fact that it is an adaptive solver which adjusts initial learning rates and hence promises to work robustly within a range of learning rates. Nonetheless, we find significant influence of learning rates on the convergence of the algorithm.

Moreover, we comment on the common observation that ADAM's performance is not scale independent. In our opinion this is an undesirable side-effect.
However, this effect seems not be limited to the ADAM solver but is consitently reported throughout the deep learning community \cite{deeplearningscaledependance}.
This leads to the counter-intuitive fact that problem hardness is affected by the absolute values of the encountered rewards and actions. Nonetheless, we acknowledge ADAM's prevalence in the deep learning community. Most deep learning practitioners therefore employ a downscaling of the relevant parameters prior to learning and a subsequent rescaling after learning to match the environments scaling again. We follow this common practice and find considerable impact of rescaling in general and the employed scaling method in specific.

\subsubsection{Actor and Critic Network Design}
\label{subsec:NNdesign}

The neural network of the \textit{Actor} consists of four layers, where the first layer is the input layer corresponding to the size of the state, the second and third are rectified linear (ReLU) layers. The hidden layer has a size of 400 input nodes and 300 output nodes. An actor network needs to output an action. The final tangens hyperbolicus layer reduces to output to one node, since each agent can only place one offer per episode. The choice of tangens hyperbolicus activation function may seem exotic at first glance, however it is motivated by mapping any desired interval to the $[-1,1]$ interval. This renormalizes the action space to a value range, where the adam optimizer operates reliable. Therefore, the actor network's output is a normalized action and needs to be renormalized to make sense within the context of the environment.

The \textit{Critic} neural network consists of three layers, where the size of the first layer is the size of the $state$, the second ReLU layer is also 400 plus the size of the $action$ vector (i.e., one), the third layer has also a size of 300 and the last layer equating to size of the $action$ vector, since a $Q$-value for every value within the $action$ vector is needed. Therefore, the critic has a certain asymmetry in design. State information traverses a one layer deeper network than action information. Including the $action$ only from the second layer is suggested by the authors of the DDPG algorithm \citep[see][]{lillicrap2015continuous}.

\begin{figure}[h]
    \centering
     \begin{subfigure}[b]{0.45\textwidth}
         \centering
         \includegraphics[width=\textwidth]{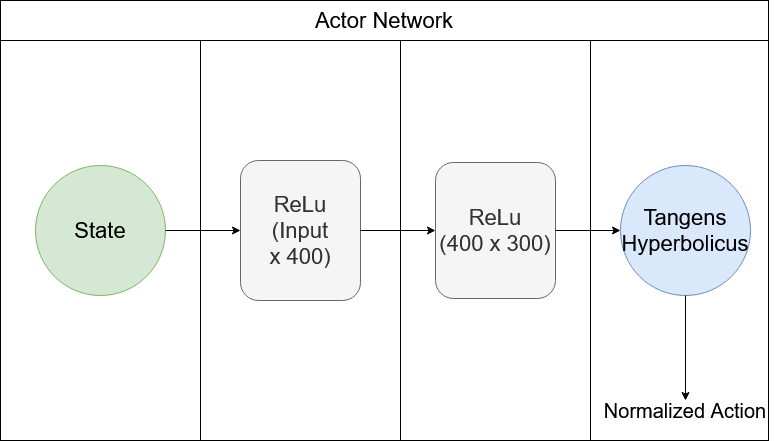}
         \caption{ }
         \label{fig:ActorNN}
     \end{subfigure}
     \hfill
     \begin{subfigure}[b]{0.45\textwidth}
         \centering
         \includegraphics[width=\textwidth]{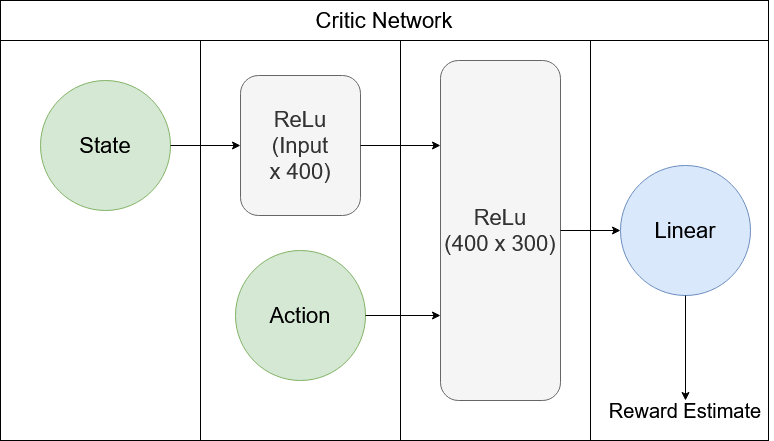}
         \caption{ }
         \label{fig:CriticNN}
     \end{subfigure}
    \floatfoot{\footnotesize \textit{Notes:} 
    Detailed representation of Panel (a): Actor Networks and Panel (b): Critic Networks. All hidden layers use Rectified Linear activation functions.
    The actor network uses one layer to process state information, subsequently performs one coarse graining step and uses a tangens hyperbolicus layer to normalize outputs between [-1,1]. Critic architecture is similar, although action input is directly fed into layer 2. The output layer is linear.}
        \caption{Schematic Representation of Neural Network Architecture}
    \label{fig:NN_architecture}
\end{figure}

\subsubsection{Noise}
\cite{lillicrap2015continuous} originally used an Ohrensteil-Uhlenbeck noise, but recent researches showed that a normally distributed Gaussain noise performes just as well \citep{TD3}. Hence, we have opted for Gaussian noise, since it can be run with fewer hyperparameters thus facilitating isolation of relevant parameters.

We decided to use normal distributed Gaussian noise with a mean $\mu = 0$ and and a variance $\sigma = 0.1$. Additionally we apply a regulation coefficient to move the mean, to enlarge the starting noise to further ensure that the whole $action$ space is explored adequately. To make sure that the algorithm converges with time, there is also an decay rate applied to the noise. The decay ensures that the noise gets smaller with each episode, until a defined minimum noise setting is reached. 

\subsubsection{Replay Buffer}
Another feature of DDPG is the replay buffer. A replay buffer stores a (typically large) number of past rounds. In our case up to 50,000 rounds are stored in the memory buffer. Initially, the buffer is empty and every action is stored within the buffer. The memory buffer is managed according to the first-in-first-out principle i.e., once 50,000 memorized actions are stored, every new action is saved in the buffer, at the expense of deleting the oldest memorized action to adhere to the memory limit.

In the standard benchmark scenario the replay buffer was a fixed parameter with size 50,000.
In this case the buffer is chosen so large, that memory constraints should not influence the algorithms outcome.
It is possible that the large buffer size may have slowed convergence, but due to us being able to attain convergence with proper learning rate choices we opted not to vary the buffer size.

Neural networks do not contain an exact description of past rounds, but are adapted iteratively by minimizing their predictions deviation from a set of target states in each update that learns new incoming information. This allows neural networks to remain small, when scaling up the size of the state-action space, with only minor prediction losses. However, neural networks of DDPG should incorporate both new information, but also be able to retain past knowledge. For this reason, every update of a neural network randomly draws a number of old rounds from the memory buffer. The replay buffer aims to statistically represent the past. Then the network is adapted to attain minimal loss on the new rounds and the rounds drawn from the memory buffer.

The replay buffer is also a hallmark feature of more modern neural network methods, that typically employ them. 
In contrast, classical reinforcemnt learning such as $Q$-learning usually have no directly corresponding type of memory.
In $Q$-learning, memory can be included into the state-space. 
However, the replay buffer is in fact a second type of memory, alongside a possible inclusion of for instance last rounds action into the state-space. Experiments lacking an explicit memory representation in terms of state-space cannot be said to be truly memory-less. Informally, the replay buffer may be seen as a type of approximate ``long-term'' memory, while the state-space memory may be closer to an exact ``short-term'' memory. An indepth analysis of the exact interrelation between both memory processes is left to future research.

\subsection{Variational Analysis of Learning Parameters}

This section covers the impact of parametrization on DDPG's performance.
We will see that correct parametrization is vital to the performance of DDPG, hence we believe that recommendations on best practices in domain specific parameter choices will be a significant contribution.
In order to make an informed parameter choice, we vary the following parameters: \textit{learning rate actor, learning rate critic, normalization methods} and \textit{state-space memory}. 
We analyze the the variations effect on the frequency of found Nash Equilibria, which is our main figure of merit.
Overall, we vary the two learning rates against each other and the normalization methods against each other. All these are once performed with and without explicit state-space memory.

\subsubsection{Learning Rates}
\label{subsec:LR_parameters}

A priori it is not clear how to set the learning rates in both of DDPG's neural networks nor whether they should be the same. Therefore, we vary the actor and critic learning rates independently and around the recommended values in \cite{lillicrap2015continuous}. More specifially, we vary both learning rates in the interval $[1e^{-2},1e^{-5}]$ with four equidistantly chosen learning rate steps per order of magnitude. Consequently, each interval is traversed in a resolution of 13 settings. This means that we have explored 169 learning rate combinations. Each learning rate setting, has been run 10 times independently with identical parameter settings. We have performed statistics on the number of runs successfully converging to equilibrium strategies after 15,000 iterated epsiodes and summarized the outcomes in Figure~\ref{Fig:LR_heatmap}.

\begin{figure}[th]
     \centering
          \begin{subfigure}[b]{0.49\textwidth}
         \centering
     \includegraphics[width=\textwidth]
         {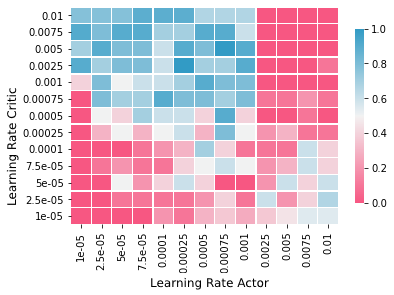}
         \caption{Empty State-space \\ (No Memory)}
     \end{subfigure} 
     \begin{subfigure}[b]{0.49\textwidth}
         \centering
         \includegraphics[width=\textwidth]
         {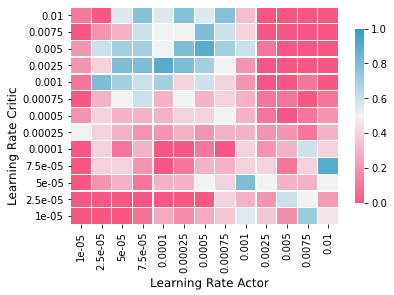}
         \caption{Small State-space \\ (with Memory)}
     \end{subfigure}
\floatfoot{\footnotesize \textit{Notes:}
The colour pixels represent the convergence rate of 10 DDPG runs with a specific combination of learning rates.
Each panel plots parameter choices between 0.001--0.000001 of the actor networks learning rate against the critics learning rate.
Blue colours correspond to high convergence rates, red colours correspond to low rates.
Panel (a) showcases an empty state-space, while Panel (b) is otherwise identical but contains last rounds actions explicitly in the state space. Panel (a) shows overall more blue pixels than Panel (b) indicating that more complex state-space are more sensitive to correct learning rate choice. Nonetheless, both Panel (a) and Panel (b) contain pixels with 100\% convergence rates, implying that with correct parameter choices complete convergence is possible even with the larger state space of b). }
\caption{Rate of convergence to Nash-equilibrium with varying learning rates}
\label{Fig:LR_heatmap}
\end{figure}

The heatmaps in Figure~\ref{Fig:LR_heatmap}, plots the respective learning rate of the actor and critic networks against the percentage of found equilbrium strategies. Color code blue [red] corresponds to high [low] shares of successful conversions to equilibrium strategies. The two heatmaps in Figure~\ref{Fig:LR_heatmap} depict the differences between a memoryless empty state-space (Panel a) and a small state-space that solely stores the actions of the last round (Panel b).

Generally speaking, we find that rates of high convergence are concentrated in the upper left quadrant of the heatmap. This quadrant corresponds to actor learning rates below 0.0025 and critic learning rates above 0.0025. Choosing learning rates of the actor and the critic to be either very small or very large impairs convergence. Furthermore, the diagonal representing balanced choices does yield suboptimal convergence results with best results on the $0.001 \times 0.001$ field. While large values of actor learning rates above $0.0025$ appear consistently detrimental, with a surprisingly sharp border between $0.001$ and $0.0025$, critic learning rates may be chosen more tolerantly.

The general pattern is similar in both heatmaps, however it is evident that the blue region of convergence is significantly smaller in the larger state space case in Figure~\ref{Fig:LR_heatmap}, Panel~(b). This shows a visible impact on convergence despite a relatively modest growth in absolute state space size of two extra-variables. Furthermore, it demonstrates that in our specific context supplying additional information is not necessarily beneficial. Although the smaller region of convergence is a drawback, we point out that this does not imply that the algorithm converges to a lesser degree with more state-information. In fact, this is more indicative of the algorithms sensitivity to parametrization. 
Both panels in Figure~\ref{Fig:LR_heatmap} reach maximal convergence rates well above 90\%.
This means that in both panels there exist parameter choices that lead to almost guaranteed convergence.
However, the number of learning rate combinations that lead to convergence above 90\% is lower in Panel (a) than Panel (b). 
This in turn means, that there are less parameter combinations that lead to reliable convergence in Panel (b), i.e., the case with state-space memory.
Hence, well-calibrated algorithms converge reliable regardless of state-space size, while ill-calibrated algorithms fail to converge. What differs is the number of learning-rate combinations that reach high-convergence rate. Large state-spaces work well with fewer learning-rate combinations, thus they are more sensitive to the choice of learning parameters. Therefore, increasing size of the blue areas in the Figure~\ref{Fig:LR_heatmap} can be interpreted as a measure of increasing robustness. This indicates a relation between state-space size and learning parameter robustness.

We have discussed the upper left quadrant in Figure~\ref{Fig:LR_heatmap} at length. However, there remains the lower right corner, where we too see convergence in some settings.
This corner is starring very low actor learning rates such as $0.01$ combined with very low critic learning rates in the order of $10^{-5}$ yielding modest convergence rates. Furthermore, we find that most of these runs are situated on the extreme edges of the action spaces. This is illustrated in Figure~\ref{Fig:lowbid_heatmap}, where we contrast the rate of convergence with small state space against the bid-distribution of the low-bidding player. We only depict the offer prices of the lower bidding player because in equilibrium the high-offering player is supposed to offer at the price cap. Therefore, in equilibrium only the low-bidding player has freedom to vary his offer prices. In the lower right corner of Figure~\ref{Fig:lowbid_heatmap}, Panel (a) and (b), we see the following pattern: the blue-colored area  of converged runs in Panel (a) mirrors the red pattern of extremely low bids in Panel (b). Hence, large actor learning rates combined with small critic learning rates effectively lead to a form of equilibrium selection that favours extremely low bids. It is not entirely clear whether this is desirable, since it is quite possible that these choices are not really the outcome of a working learning algorithm more an over fitting of the action spaces borders. Therefore, we regard the convergence in the lower right quadrant as a less desirable outcome than the convergence in the upper right corner.

\begin{figure}[th]
     \centering
          \begin{subfigure}[b]{0.49\textwidth}
         \centering
     \includegraphics[width=\textwidth]
         {lr_heatmap_woPA_v2.png}
         \caption{Rate of Convergence
         \\
         (Empty State-space)}
     \end{subfigure} \begin{subfigure}[b]{0.49\textwidth}
         \centering
         \includegraphics[width=\textwidth]
         {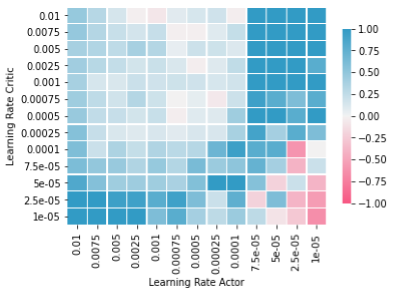}
         \caption{Offer-distribution\\
         (Empty State-space)}
     \end{subfigure}
     \floatfoot{\footnotesize \textit{Notes:}
Panel (a) plots the convergence rate of 10 DDPG runs per pixel depending on actor and critic learning rate. 
Panel (b) plots instead the average of the 10 runs lower bids. Blue corresponds to high and red to low values. We can see a similar curved shape in both Panels. Typically, low convergence rates correspond to both player bidding near the price-cap.
In contrast, while the top-left and bottom-right corner are both convergent in a), while the bid distribution differs significantly in b). All runs were performed in Benchmark Scenario \ref{Sec:BenchmarkingScenario}.}
\caption{Relating Rate of Convergence to Winning Offer-Distribution}
\label{Fig:lowbid_heatmap}
\end{figure}

To summarize, we advise choosing the critic learning rate larger than the actor learning rate. While there is a significant range of valid actor learning rates, typically large learning rates are beneficial for the speed of convergence, if convergence is attained at all. Therefore, we recommend to stay within the $10^{-3}-10^{-4}$ range. We also point out that the inclusion of state variables is a delicate choice. Our results show that it is possible to attain competitive convergence rates with more complex states. However, the relevance of correct parameter choice increases too. Relevant state parameters should be included together with carefully inspection of the learning parameters, but we believe that less-informative or even redundant state variables would have a detrimental effect on the algorithms performance.

\subsubsection{Normalization Schemes}
\label{Subsec:Normalization}
One of the striking methodological differences between classic reinforcement learning ($Q$-learning) and deep learning is the strong prevalence of normalization methods within the latter.

Most machine learning practitioners report significant scale dependencies on the magnitude of the input parameters, when applying deep learning techniques \citep{deeplearningscaledependance}. This surprising effect means that it may be more effective to learn from inputs in an interval $[0,1]$ than $[0,1000]$. These scale dependencies are observed consistently throughout differing problem domains \citep{deepImage,deepSpeech2012}, even apparently scale-invariant problem domains. 
Therefore, down scaling of input data prior to learning and rescaling to the real problem domains scale after learning is standard practice within the deep learning community.

The potential dependence of model outcomes on normalization distinguishes deep learning from classical methods that is why we provide an in-depth analysis of the impact of normalization on learning. We contrast \textit{unnormalized data} with the two most relevant normalization schemes:  \textit{batch normalization} \citep{batchNorm} and  \textit{layer normalization}
\citep{layerNorm}. 
Note, that within DDPG in each step of learning, several neural networks are adapted to a batch of several observations.
Each individual observation is represented by a vector, where each component represents a feature of the state-action space.
Batch normalization curbs the variance between several input vectors (i.e., samples) and layer normalization reduces the variance between the components (i.e., features) of an individual input vector. 
This means that batch normalization normalizes seperately each indvidual features range over several observations.
In contrast, layer normalization normalizes the features within a single observation.
All three normalization approaches are once applied to a problem formulation without state variables (i.e., only actions as inputs) and a formulation that includes last rounds action as state-variables (i.e., ``single round memory'') as elaborated in Section~\ref{Subsec: State Space Model}.

\begin{figure}[ht]
     \centering
     \begin{subfigure}[b]{0.49\textwidth}
         \includegraphics[width=\textwidth]
         {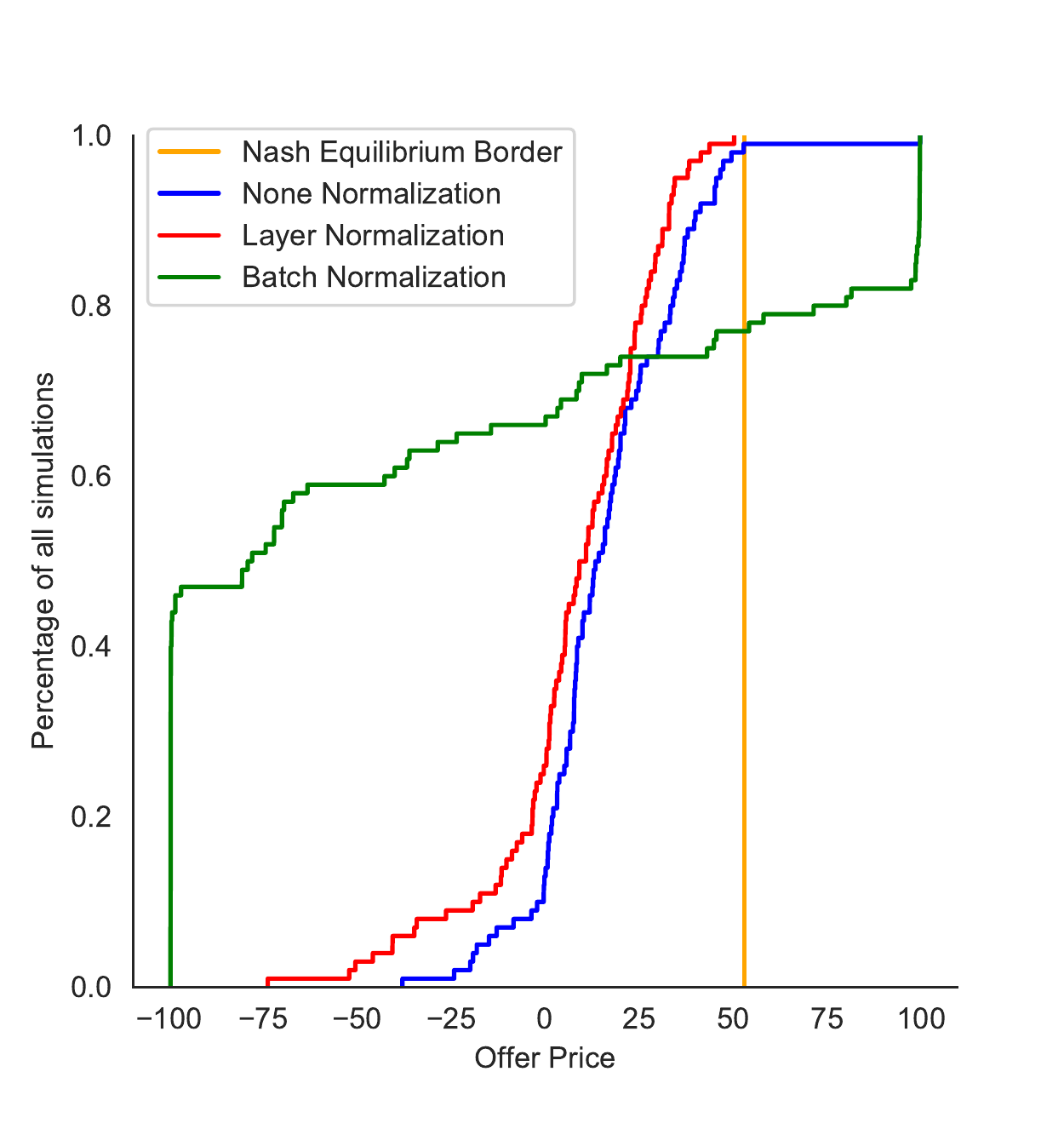}
         \caption{Empty state space (excluding last actions)}
         \label{Subfig:3norm_nomemory}
     \end{subfigure}
     \hfill
     \begin{subfigure}[b]{0.49\textwidth}
         \includegraphics[width=\textwidth]{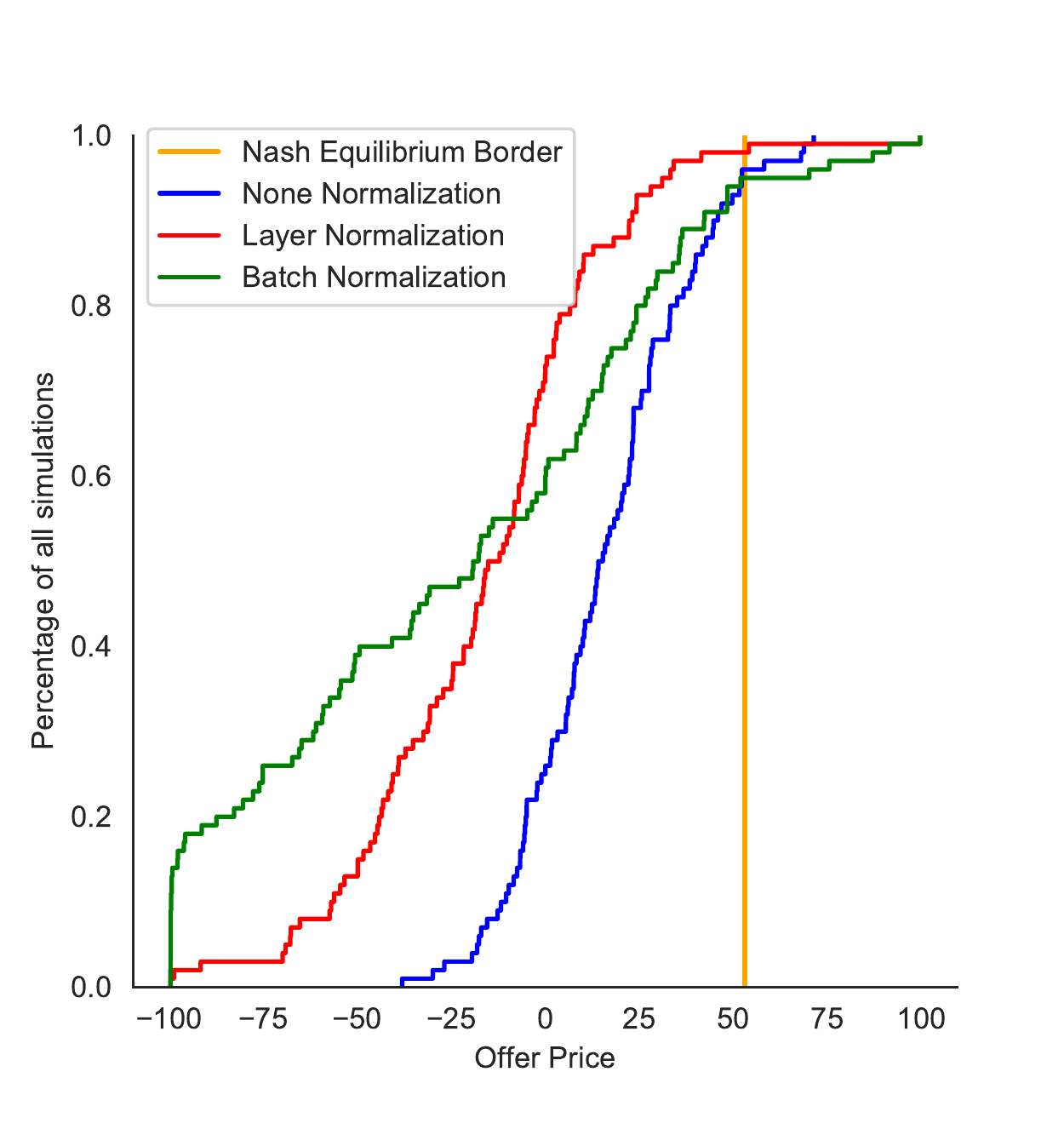}
         \caption{Small state space (including last actions)}
         \label{Subfig:3norm_memory}
     \end{subfigure}
        \floatfoot{\footnotesize \textit{Notes:} We compare 100 unnormalized (blue), batch normalized (green) and layer normalized (orange) DDPG runs in Benchmark Scenario \ref{Sec:BenchmarkingScenario}. Panel (a) depicts runs with empty state space, Panel (b) depicts runs with explicit state-space information about last rounds actions.
        We plot cumulative densities with respect to offer height, thus showcasing the distribution of the low-bidding players offer heights.
        Bids left of the red bar correspond to an equilibrium strategy.
        The graphs plot the percentages of all bids that attained at least the indicated height. Note that the offer-price was rescaled to its natural range after normalization.
        }
        \caption{Distribution of Offer-heights depending on Normalization}
        \label{fig:three graphs}
\end{figure}

Figures \ref{Subfig:3norm_nomemory} and \ref{Subfig:3norm_memory} contrast the impact of different types of normalization schemes and state-space memory information on the number of learnt equilibrium strategies for the low-offering player. The figures of merit depict the percentage of offers within equilibrium. In order to collect statistics all normalization schemes have been iterated 100 times. Remember, that equilibrium outcomes in our setting are characterized as one player offering at the price cap and its opponent offers below the threshold given in \eqref{eq:NE_charakterization_low}. The threshold is determined by the parameters of the game, i.e., price cap, marginal costs, and the ratio of capacity to residual demand. For our standard model parameters (see Table \ref{Fig:TableParameters}), this threshold is calculated as $(\overline{p}-c)\frac{(D-\overline{q})}{\overline{q}} + c = (100 - 20)\frac{70-50}{50} + 20 = 52$. We depicted the threshold as the yellow vertical line in Figures \ref{Subfig:3norm_nomemory} and \ref{Subfig:3norm_memory}. The three remaining lines in each figure represent the cumulative density function, i.e., the distribution of bids derived by differing normalization schemes ordered by their offer height. The intersection point between one of these lines and threshold (yellow) is key, because its height on the y-axis gives the percentage of equilibrium compatible bids. Here, the higher the point of intersection on the y-axis, the better. 

We find that in the memory-less case (Panel a) with small state space, layer normalization (red, approximately 100\%) outperforms ``no normalization'' (blue, approximately 99\%) as well as batch normalization (green, approximately 77\%). In Panel (b), we conduct  the analysis, explicitly including the last round of actions into the state-space. As before, layer normalization (red, approximately $98\%$), outperforms the unnormalized scheme (blue, approximately 93\%) and batch normalization (green, approximately 92\%). Layer normalization has consistently performed best in our setting and is thus recommendable with the caveat that even unnormalized runs performed relatively well. Batch normalization performed worst and cannot be recommended as is. However, we find the strong sensitivity of batch normalization to the increased state space size remarkable. We reemphasize that the increase in size of two additional state variable is relatively small compared to state-spaces size common in the ATARI domain \citep{Atari2015DQN}. Nonetheless, batch normalization improved its performance by at least 20\%, while the other methods got slightly less efficient. This motivates us to conjecture, that batch normalization might proof robust when scaling-up the state-space complexity. Moreover, the relatively good performance of unnormalized runs might be lost in larger state-space, since even a small increase notably had a relevant effect on the unnormalized runs.

Aside from the impacts of normalization on rates of equilibrium convergence, we observe remarkable qualitative differences in the arising offer distributions. For instance, batch normalization exhibits a much flatter offer distribution compared to the unnormalized case at almost identical rates of convergence in the large state-space case (see Figure~\ref{Subfig:3norm_memory}). In the memory-less case, the difference in distribution is even more pronounced, albeit the large difference in convergence makes the distributions overall less comparable. 

In our setting layer normalization performed best. However, the impact of further increases in state-space size seems a promising line of research, since it seems to impact the performance significantly and real-world problems are expected to be equipped with large state-spaces. Furthermore, normalization choices influence the arising offer distributions even at similar rates of convergence. Hence, normalization choices should not be taken solely on a technical basis but consideration is necessary that they also mirror the underlying modelling problem.


\label{subsec: parameter variation}
\subsection{Qualitative Analysis}

We have tried so-far to remain as quantitative and explicit as possible when assessing the performance of DDPG to derive equilibrium outcomes in uniform price auctions. We complement this assessment with a brief qualitative discussion. The learning progress throughout a run can vary significantly and many characteristics are hard to distill into a single numerical measure. Nonetheless certain differences in learning behavior are striking when different parametrizations are contrasted: In Figure~\ref{Fig:QualitativeGrid}, we have selected four representative model runs for each of the three normalization schemes. Two runs in each normalization category are memory-less and two have been conducted with single round memory. 
The chosen runs certainly represent a reoccurring pattern, but nonetheless there are other runs in the same category with distinct appearances. Each category has been run 100 times and we briefly comment on impressions that we had when inspecting these runs.
In order to make exact the individual impressions gained from Figure~\ref{Fig:QualitativeGrid},
we follow up with a statistical evaluation of the offer price development throughout the learning process averaged over all 100 runs with the same normalization and memory choices in Figure~\ref{Fig:Competitiveness}. There, we depict the development of the average competitiveness between the algorithms throughout the learning process.
This is measured for each individual episode by the percentage of runs that undercut each other in a given episode. Overall, competitiveness starts high and eventually decreases, however there are distinct differences in the speed of decline depending on normalization scheme and memory model.

\begin{table}[bthp!]
\begin{tabular}{|cc|c|c|}
\hline
                                                                            & Unnormalized & Layer Normalized & Batch Normalized \\ \hline
\multicolumn{1}{|l|}{\begin{tabular}[c]{@{}l@{}}\rotatebox{90}{No Memory}\end{tabular}} & \includegraphics[width = 1.8in]{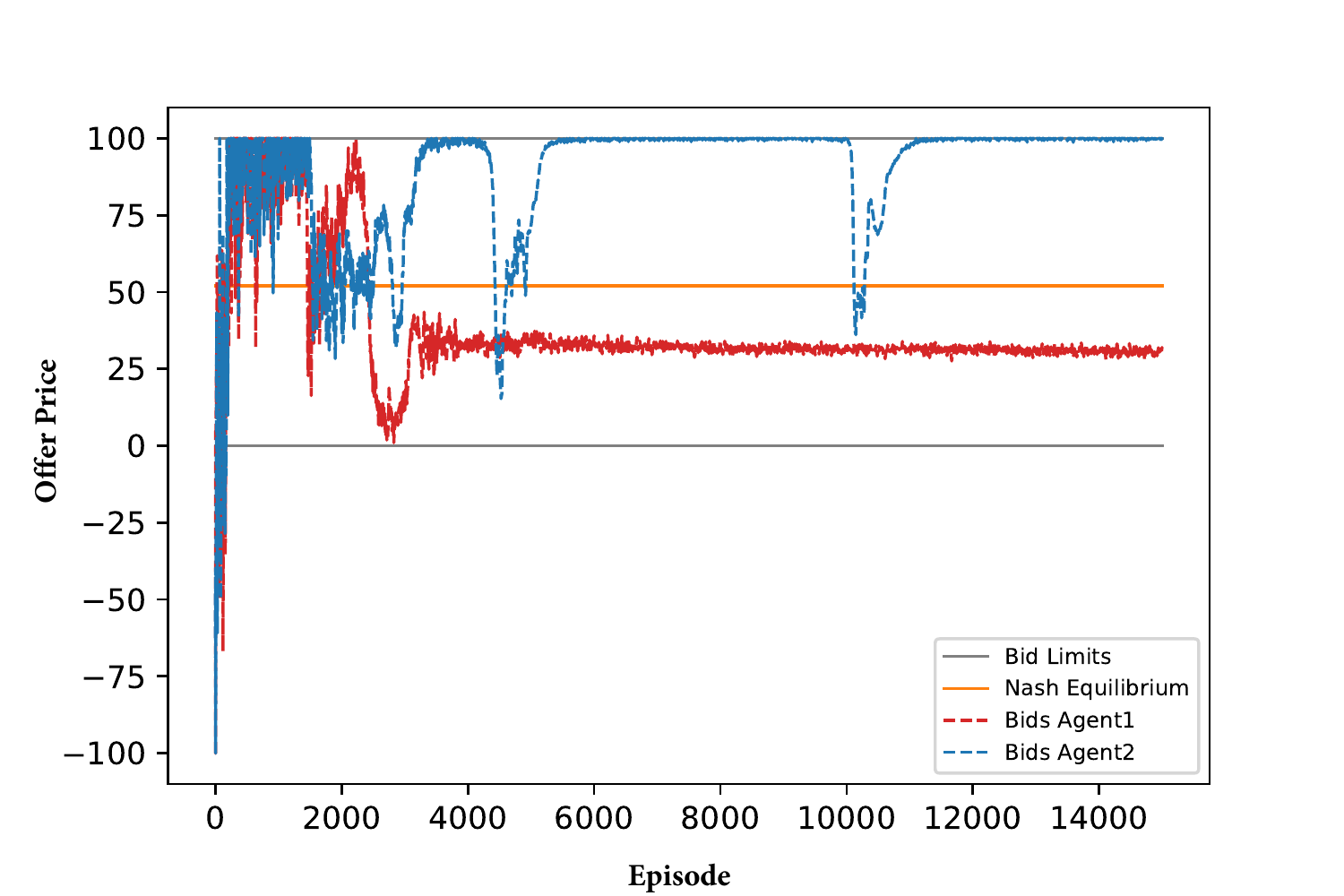}             
                                                                            & \includegraphics[width = 1.8in]{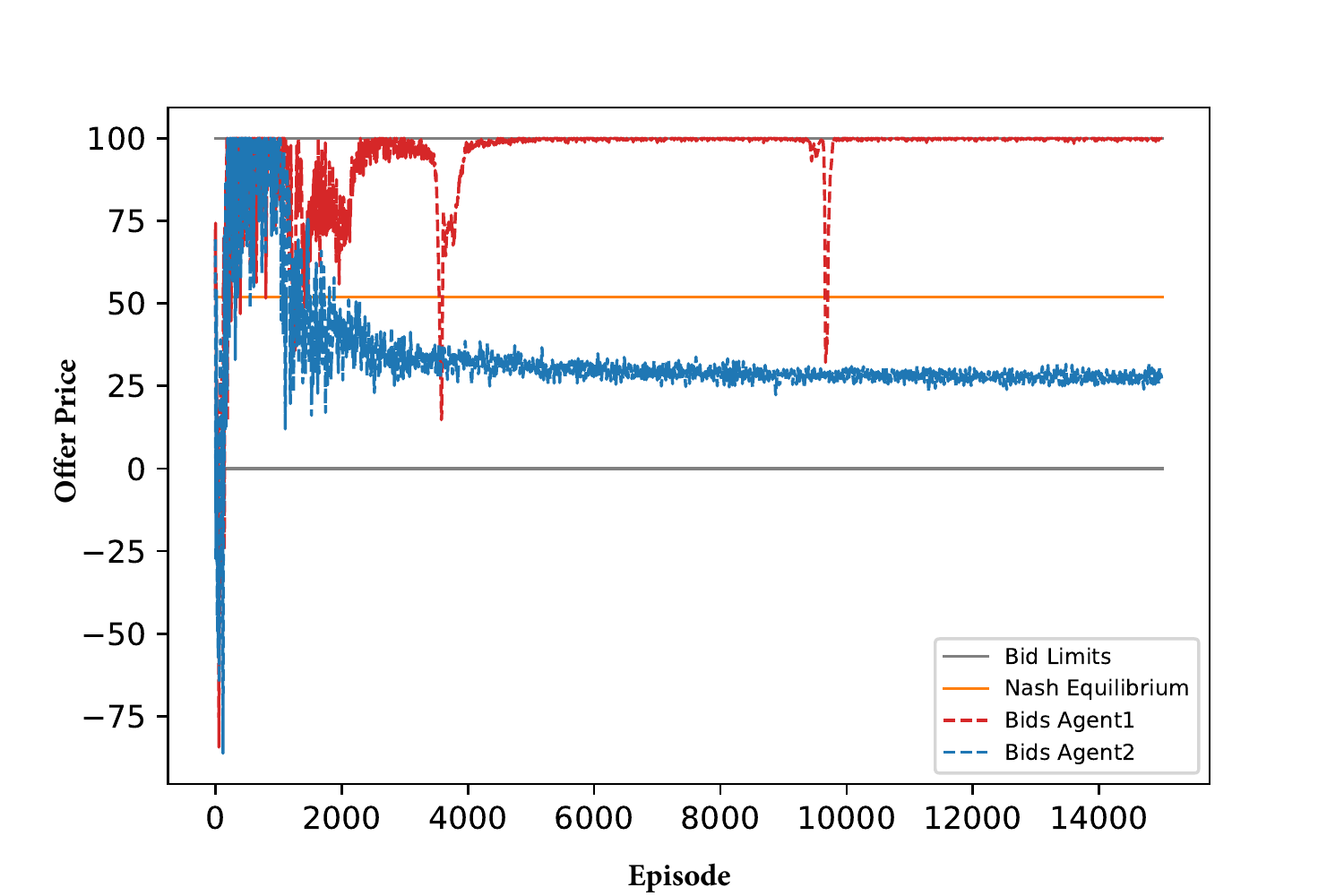}                  
                                                                            & \includegraphics[width = 1.8in]{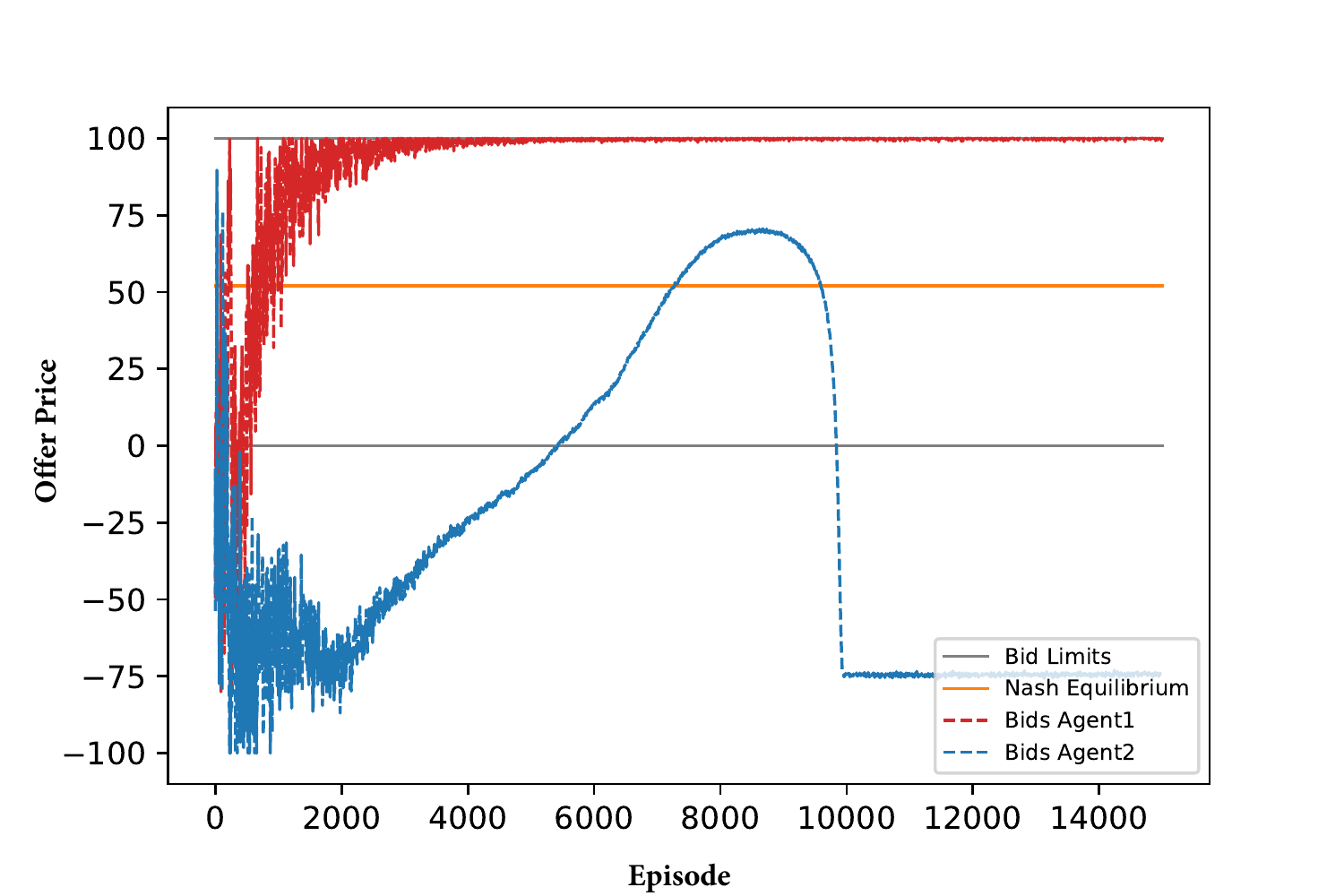}                  \\
\multicolumn{1}{|l|}{}                                                      & \includegraphics[width = 1.8in]{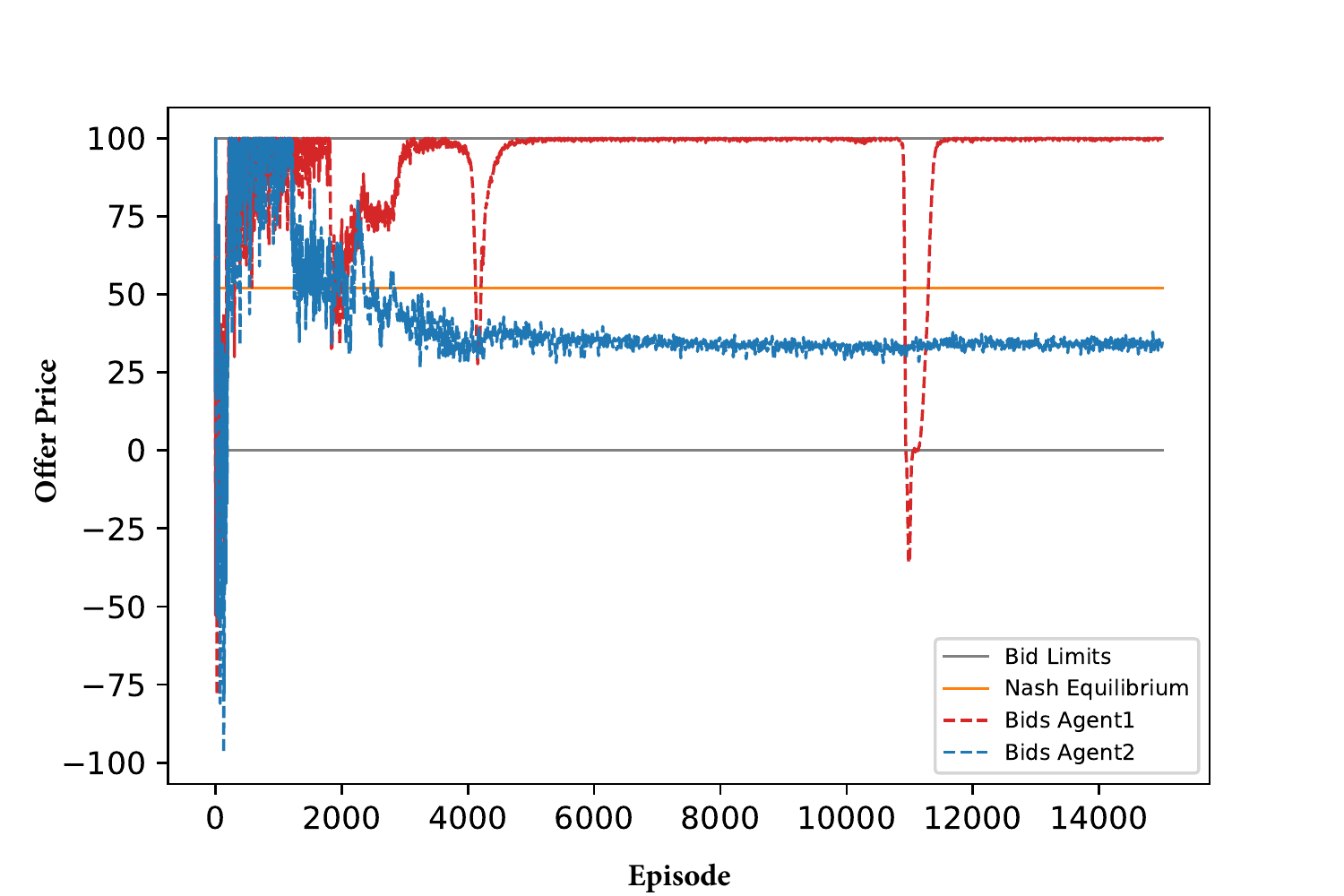}              
                                                                            & \includegraphics[width = 1.8in]{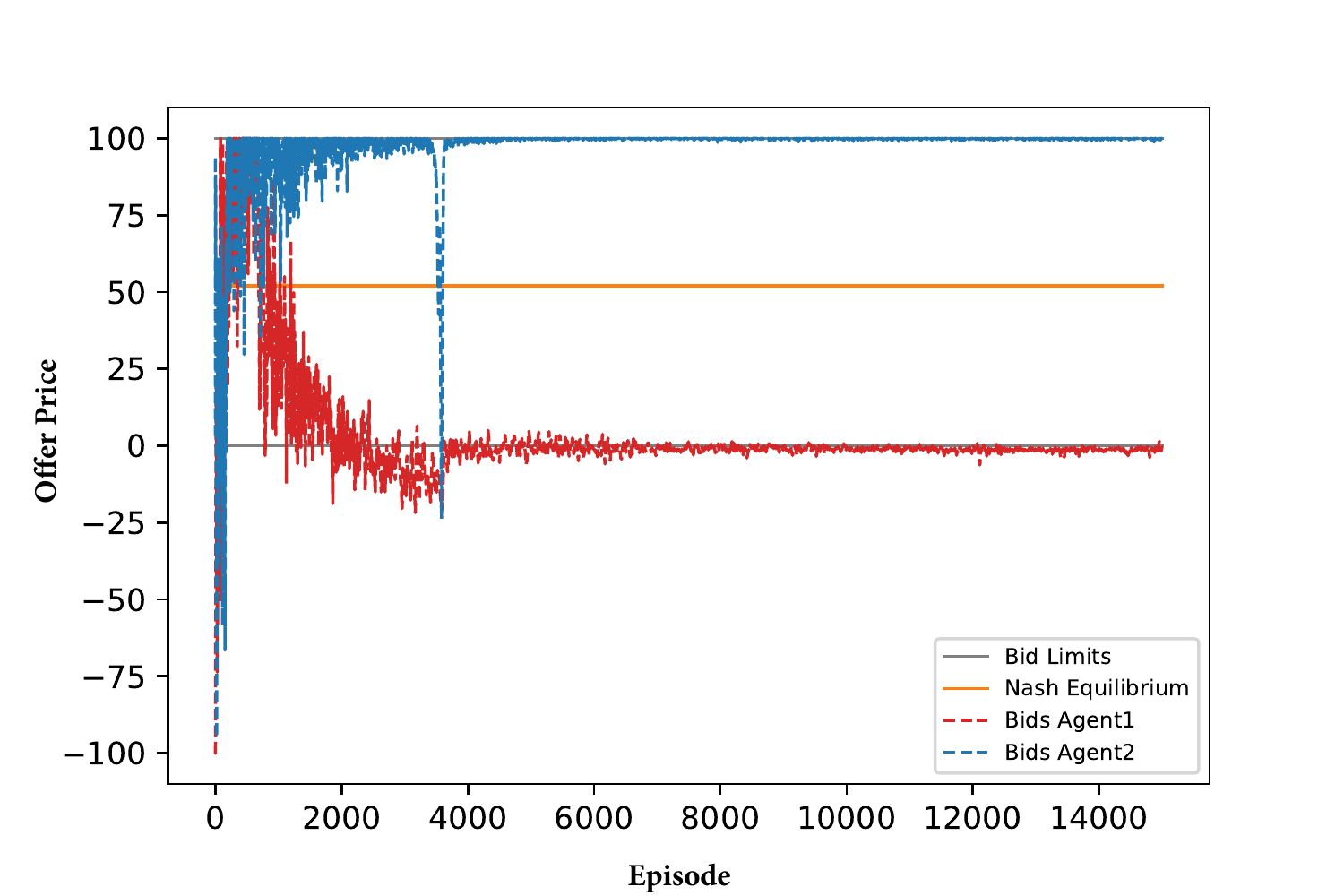}                 
                                                                            & \includegraphics[width = 1.8in]{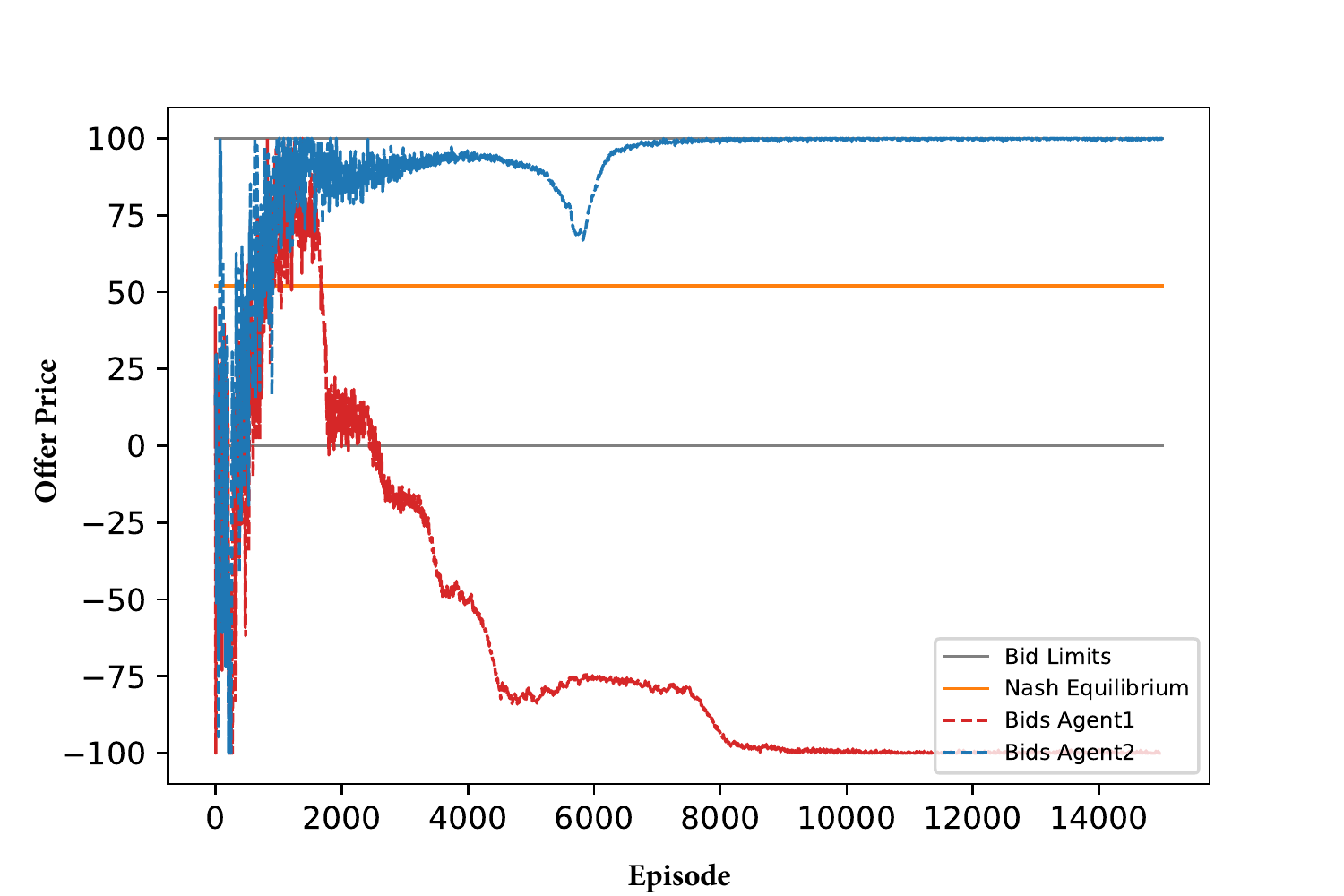}                 \\ \hline
\multicolumn{1}{|l|}{\begin{tabular}[c]{@{}l@{}} \rotatebox{90}{With Memory}\end{tabular}} & \includegraphics[width = 1.8in]{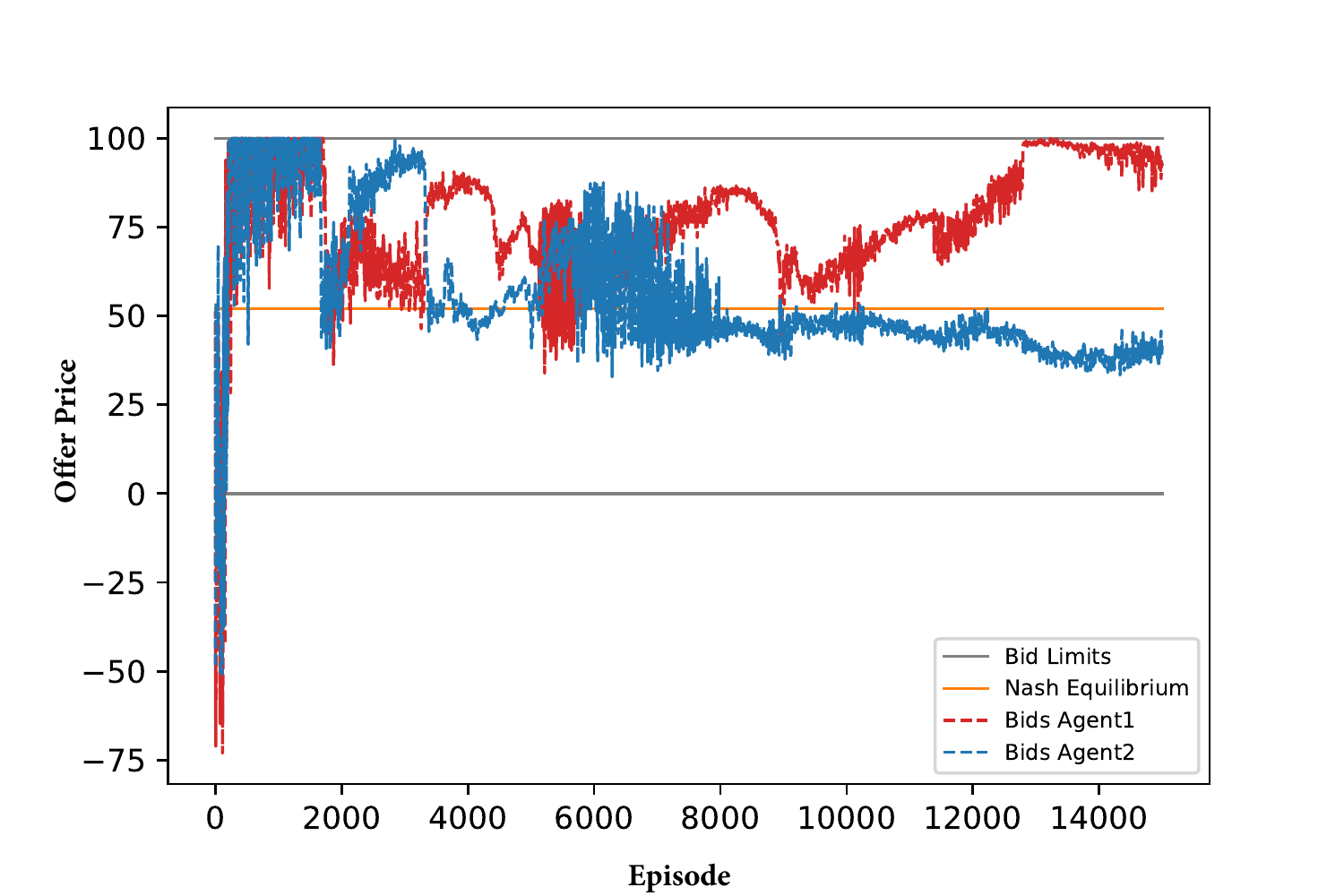}             
                                                                            & \includegraphics[width = 1.8in]{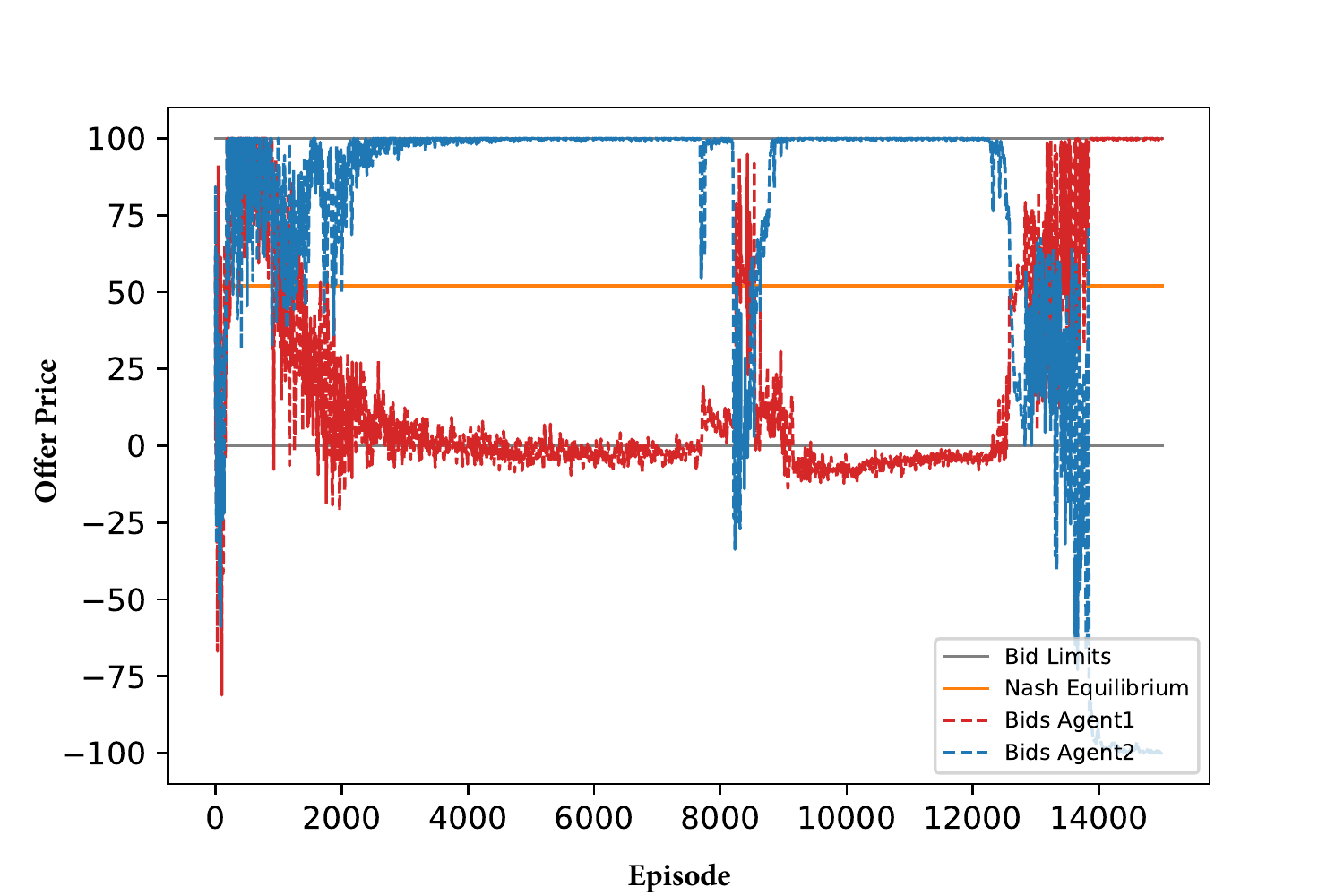}                 
                                                                            & \includegraphics[width = 1.8in]{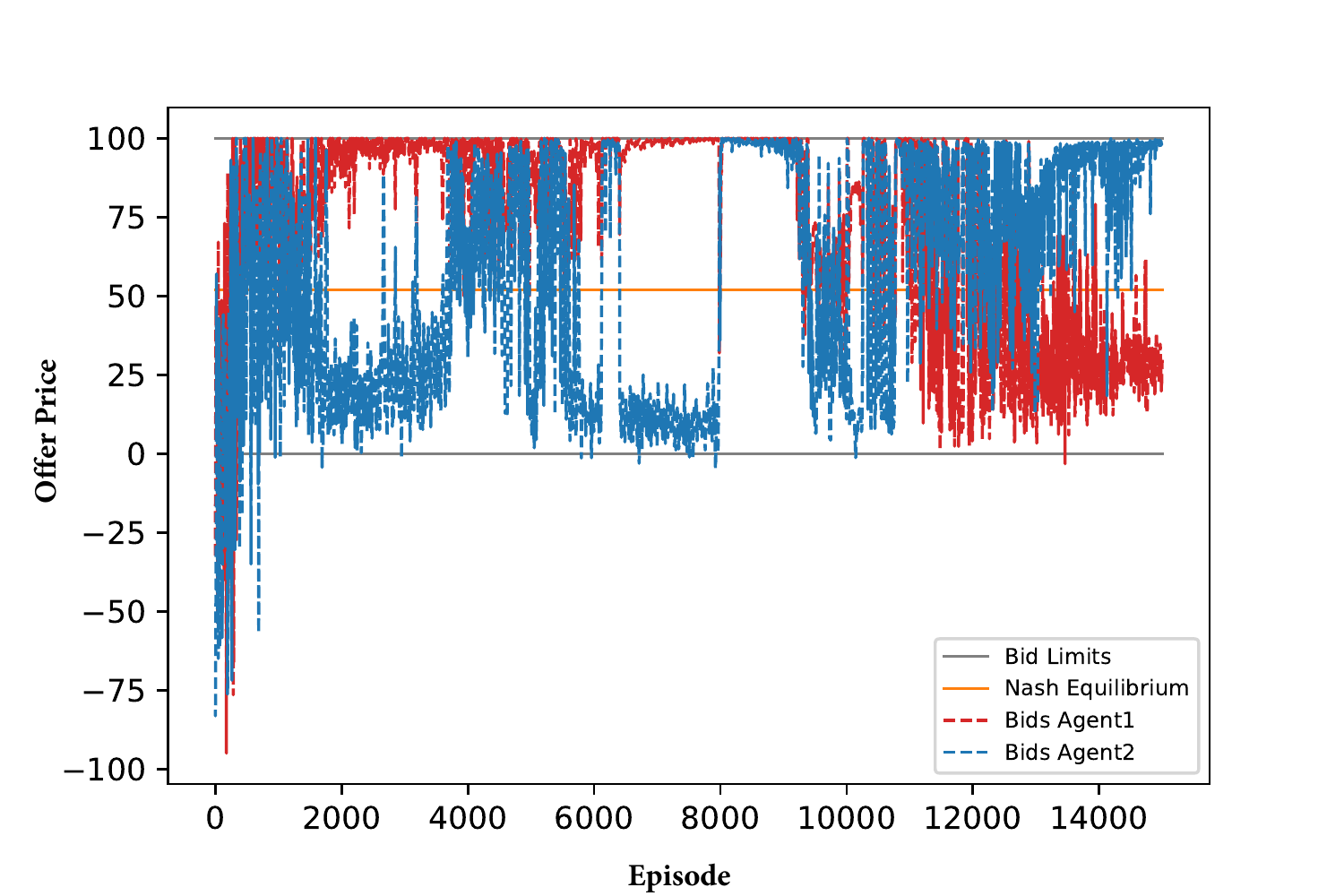}                 \\
\multicolumn{1}{|l|}{}                                                       & \includegraphics[width = 1.8in]{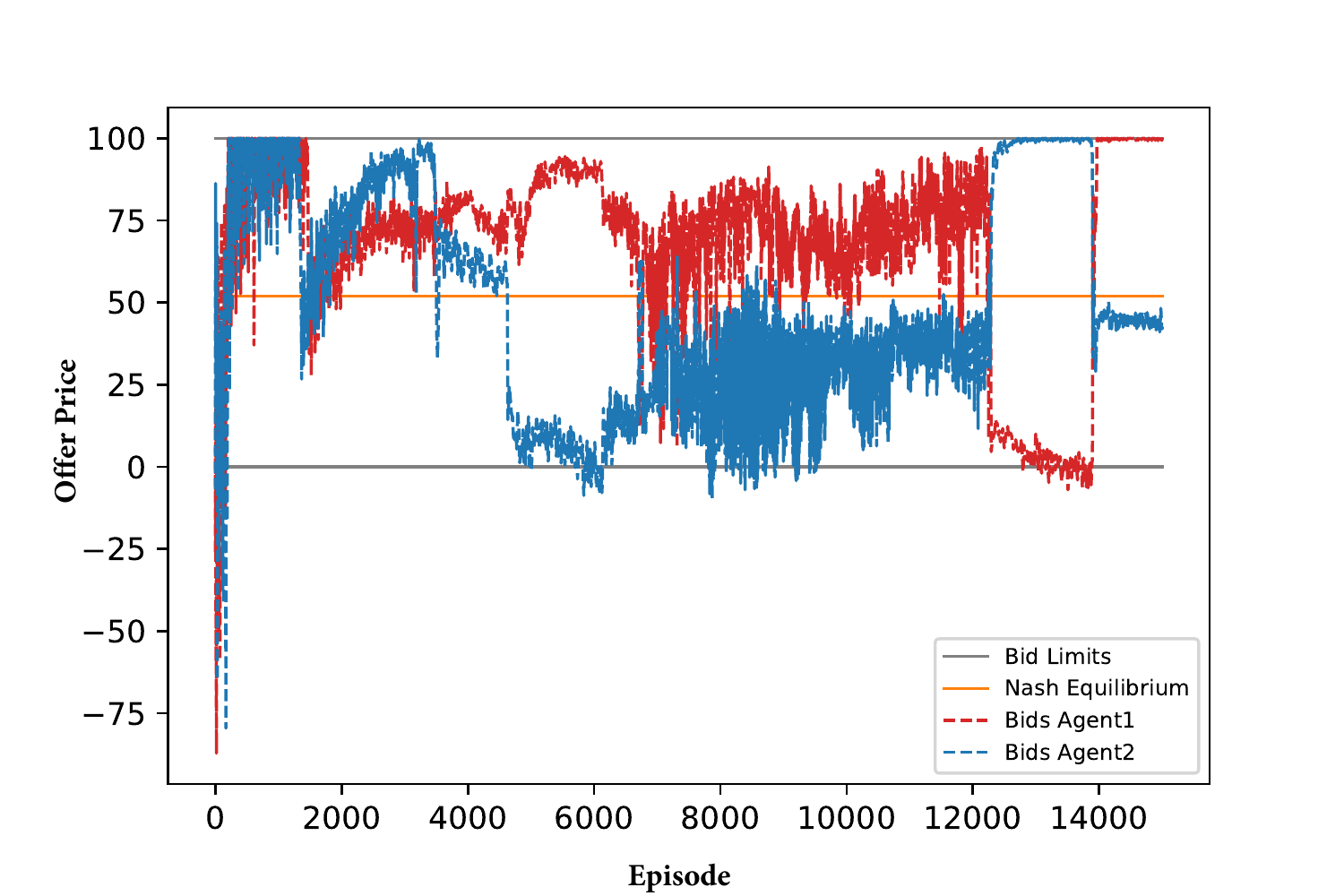}             
                                                                            & \includegraphics[width = 1.8in]{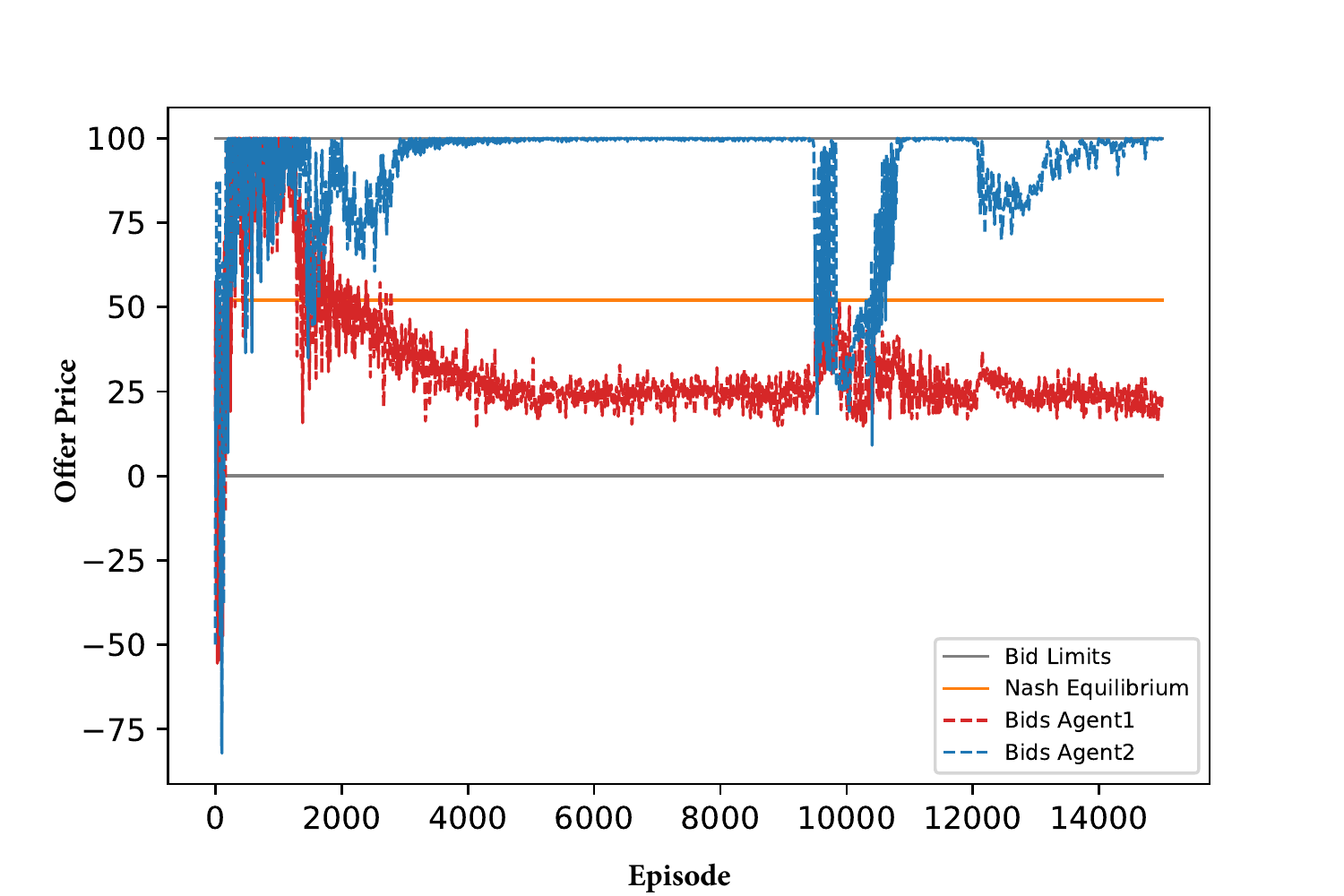}                
                                                                            & \includegraphics[width = 1.8in]{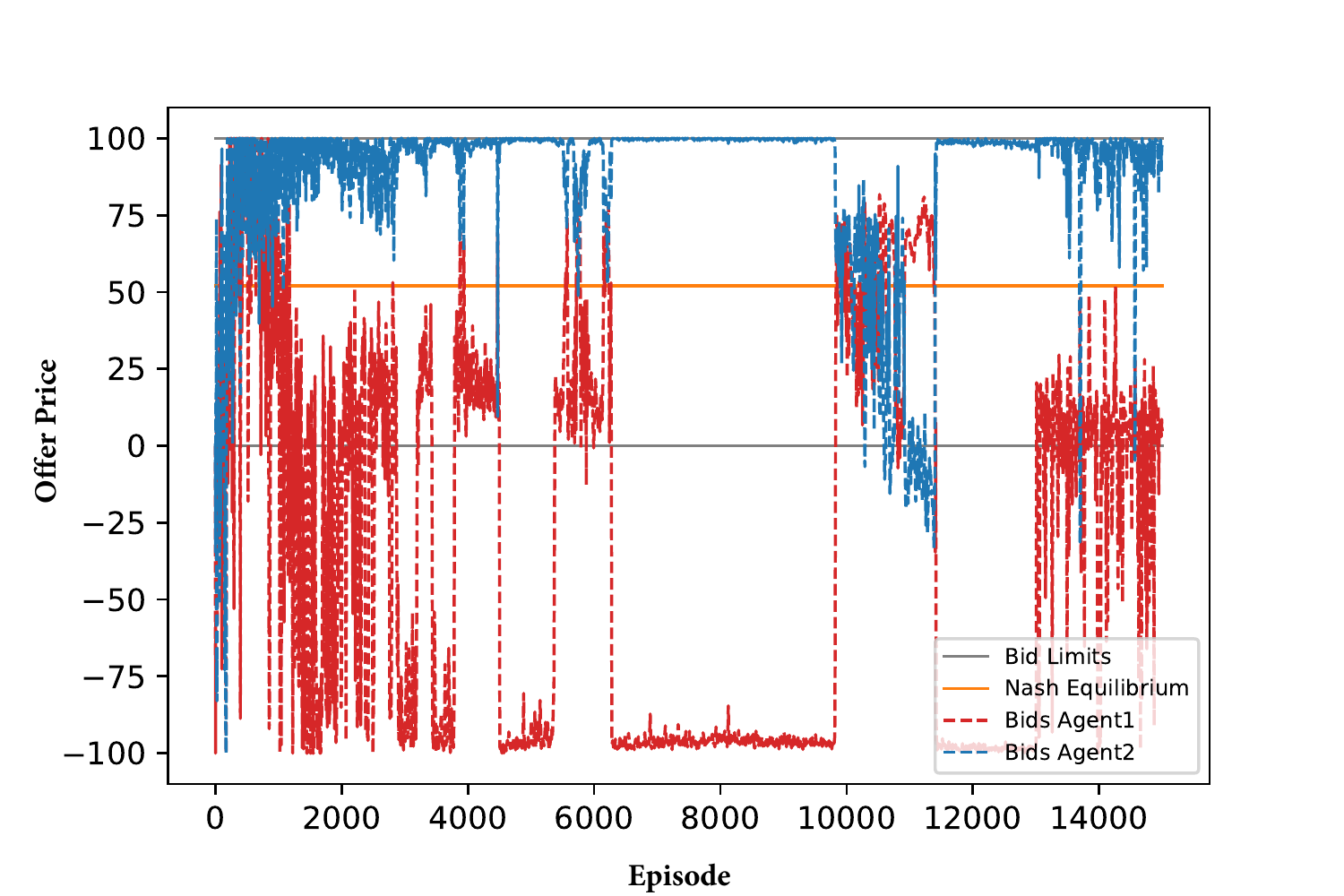}                \\ \hline
\end{tabular}
\floatfoot{\footnotesize \textit{Notes:}
Exemplary selection of 2 out of 100 DDPG runs per normalization methods and memory. Top-row plots runs without state-space memory, bottom-row plots runs with state-space memory. We believe, that in the runs with memory stronger competition is apparent. This assertion is strengthened by  Figure~\ref{Fig:Competitiveness}'s quantitative evaluation of the average number of undercuts.}
\caption{Qualitative depiction of normalization schemes and memory influence on bidding behavior}
\label{Fig:QualitativeGrid}
\end{table}

Overall, to our surprise the choice of normalization seems to strongly influence behavioral patterns of the agents.
We believe that one of the most striking facts is the influence of memory. In Table~\ref{Fig:QualitativeGrid}, we see significantly more competition between both participants, whenever we include memory in the state space. 
The memory less runs tend to settle quickly into a Nash equilibrium and exhibit an overall more static behavior.
This seems to come along with a tendency to converge easily.

This observation is confirmed by the statistics from Figures~\ref{Fig:Competitiveness}. In all cases, we see considerable more variance in the means of the cases with memory than without, although the range of offers within one standard deviation is only slightly enlarged. This illustrates an overall more unsteady behavior in the non-empty state spaces. 
The increased variance of means with memory seems to be consistent throughout the different normalization schemes.

Even though we have seen in Section~\ref{subsec:LR_parameters} that memory leads to larger sensitivity of convergence towards learning rate choice, we believe that this is not necessarily a sign of bad learning. Indeed, we are inclined to conjecture that the memory allows to know the opponents behavior better and thus engage in fiercer competition.
From the perspective of market simulation this may be a more realistic behavior or at least it may be seen as parametrizing some sort of risk-seeking. Hence, we investigate the competitveness of the two alorithms as a distinct feature of interest.

Initially, it is not clear how to measure competitiveness. We point out that a high variance is not sufficient for ensuring competitveness. For instance, consider the batch-normalized no memory examples depicted in Table~\ref{Fig:QualitativeGrid}. There, both low bidding players vary there bids considerable, but their price offers never reach the proximity of the high bidding player after episode 2000. This is an example of relatively high variance with no competiteveness. 
 
Hence, we propose the following measure of competitiveness. If one player is the high bidding player in one turn and the low bidding player in the next turn we count this as a switch. We count the number of switches per episode and divide by the turns per episode, yielding the average switches per episode. This relative measure is normalized to range in $[0,1]$ with 1 corresponding to a switch in every turn and 0 to never switching. We believe, that this measure clearly captures the notion of competitive behavior, due to being high precisely when opponents undercut each other frequently. 
 
One caveat of the measure may be that it not explicitly considers the magnitude of undercuts, but we still belief it to be informative.

Finally, we smooth the measure by applying a rolling average over a window of 100 episodes. Smoothing is not conceptual, but serves a better visual representation, since the significant variance of the competitiveness measure between episodes leads to indistinguishable graphs.
 
We depict the development of the algorithms competitiveness measure in Figure \ref{Fig:Competitiveness}. Each colour represents a different normalization method. Panel (a) depicts 100 runs without state-space memory, while Panel (b) depicts 100 runs with state-space memory. 

We point out that DDPG is equipped with two possibly interacting types of memory: the replay buffer and a possible state-space memory. Hence, results of Panel (a) are not to be understood as memoryless, however they do not include the actions of last round in the state-space, but only rely on the statistical representation of the history within the replay buffer of size 50,000. Nontheless, we see a significant effect of state-space memory on the algorithms comeptitiveness.
 
First, it has to be noted that we find unormalized runs to be the most competitive ones, while normalization tends to decrease competitiveness overall. Hence, normalization methods can severky affect the resulting outcome.
 
Generally, state-space memory does not strongly impact the development of unnormalized or layer normalized runs. This stand in stark contrast to batch normalized runs, where competitevness changes completely. While Panel (a) sees batch normalized runs by far the most uncompetitive, in Panel (b) the memory alleviates the normalization effects leaving batch and unnormalized runs almost as competitive. Generally, we believe that a normalization method is preferable if its behavior is closer to the unnormalized case introducing another quality to consider besides pure convergence rates.

Contrasting these results with the convergence rates of Figures \ref{Subfig:3norm_nomemory} and \ref{Subfig:3norm_memory}, we see that while layer normalization leads to perfect convergence and admits consistent behavior regardless of the state-space, it comes at the price of a significant decrease in competitiveness and thus essentially makes a behavioral assumption.
This is not necessarily a disadvantage, but it illustrates that the choice of the normalization method needs to be in fact a conscious modelling choice.

The decrease in convergence between layer and unnormalized runs is marginal, at least in the low feature regime that we studied. Indeed, it may be worthwhile to forfeit normalization against common machine learning practice, if competitive behavior is seen as desirable. However, the drop in convergence of unnormalized DDPG runs is significant already in the only slightly enlarged state space of Figure \ref{Subfig:3norm_memory}. It is therefore quite possible, that unnormalized DDPG is not robust in complex state spaces.

Another interesting alternative, is the use of batch normalization methods. Although, this normalization method compeletely fails both in convergence and competitiveness in Panel (a), it seems to be a sensible choice in Panel (b). If accompanied with state-space memory batch normalization is similarly in competitiveness to unnormalized runs and convergence. Moreover, it is the only method that increased its performance in the larger state-space, while all others performed worse. This may indicate a superior scaling of batch normalization to more complex state-action spaces. 
Considering, that batch normalization methods stem from computer game learning environments, where each pixel is regarded as individual state it seems at least plausible that this method truly shines only in large state spaces.
Nonetheless, this can not be fully concluded from the result at hand, but
certainly merits deeper investigation to check this conjecture. 

\begin{figure}[htb]
     \centering
     \begin{subfigure}[b]{0.49\textwidth}
         \centering
         \includegraphics[width=\textwidth]
         {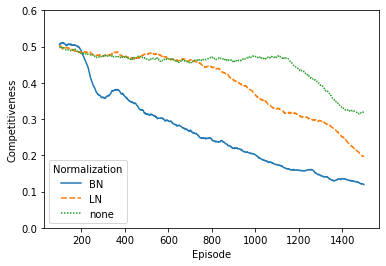}
         \caption{No State-Space Memory}
     \end{subfigure}
          \begin{subfigure}[b]{0.49\textwidth}
         \centering
     \includegraphics[width=\textwidth]         {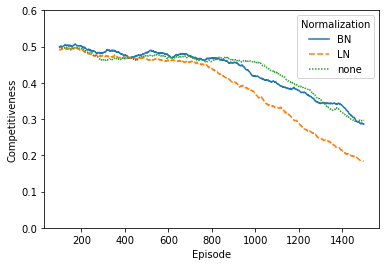}
         \caption{With State-Space Memory}
     \end{subfigure}
     \floatfoot{\footnotesize \textit{Notes:}
     To measure competitiveness, we depict the average rate of undercuts per episode from 100 DDPG runs smoothed by a rolling average over 100 episodes.
     Each plot shows 3 normalization methods: unnormalized (green), layer normalized (yellow) and batch normalized (blue). Panel (a) represents runs without explicit state-space Memory, while Panel (b) allows state-space Memory.}
    \caption{Competitiveness influenced by Normalization and Memory} 
    \label{Fig:Competitiveness}
\end{figure}

\subsection{Replay Buffer \& the limiting case of the ``ultra-competitive'' Benchmark Scenario}
\label{subsec:D=CAP}

\begin{figure}[hp]
\begin{subfigure}{\textwidth}
  \centering
  \includegraphics[width=0.7\linewidth,height=0.28\textheight]{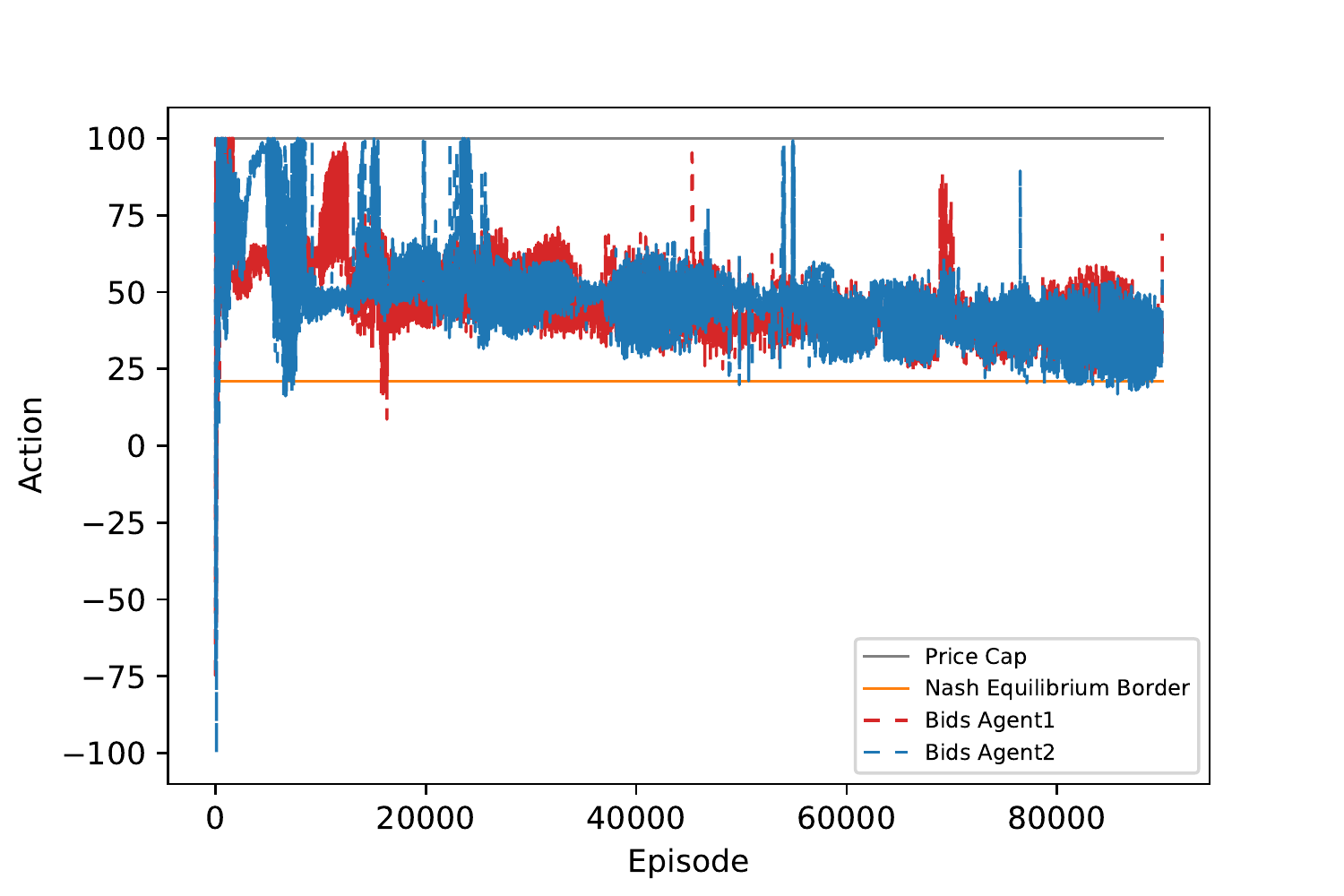}
  \caption{50,000 Round Replaybuffer}
  \label{fig:50000membuf}
\end{subfigure}
\begin{subfigure}{\textwidth}
  \centering
  \includegraphics[width=0.7\linewidth,height=0.28\textheight]{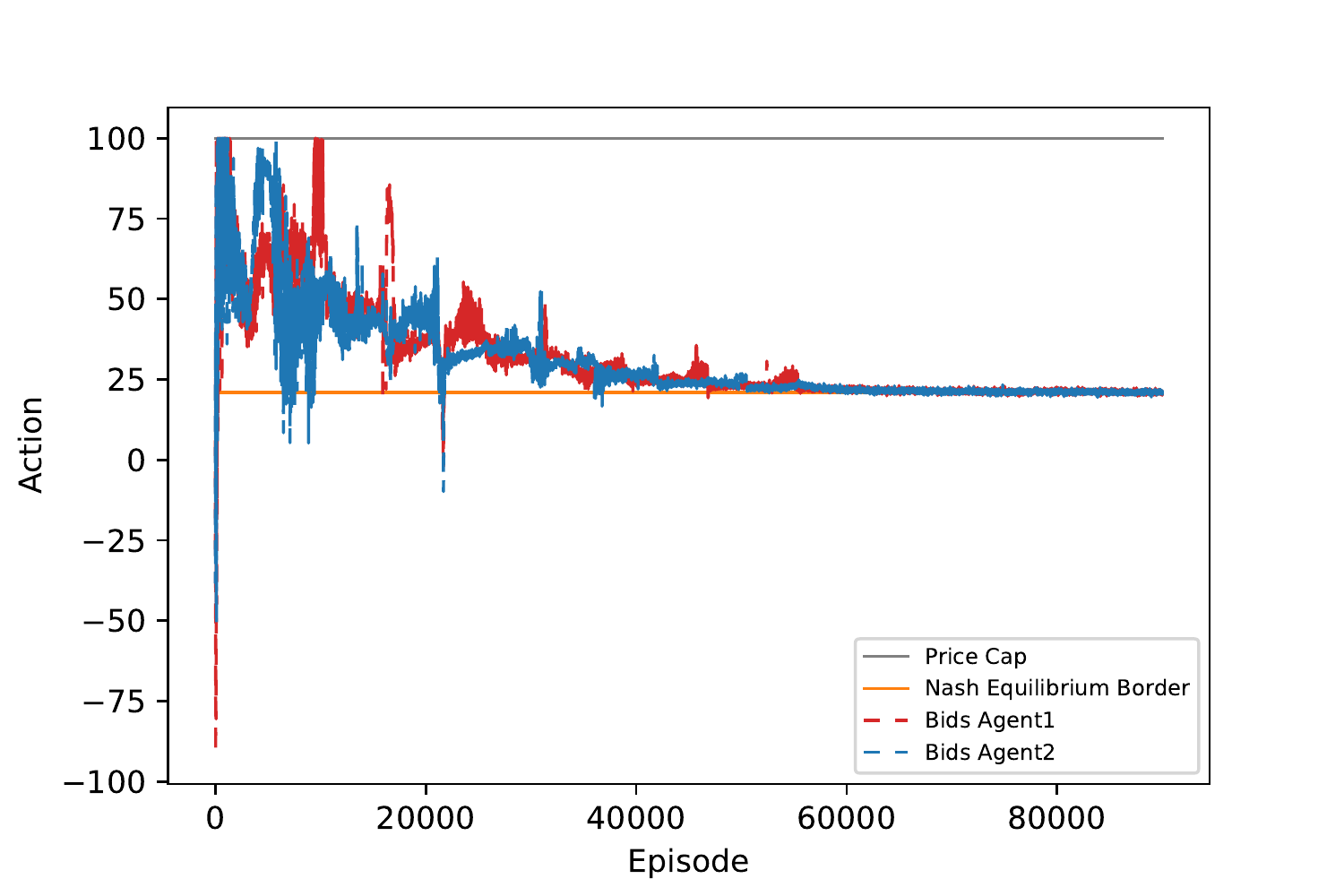}
  \caption{5,000 Round Replaybuffer}\label{fig:5000membuf}
\end{subfigure}
\begin{subfigure}{\textwidth}%
  \centering
  \includegraphics[width=0.7\linewidth,height=0.28\textheight]{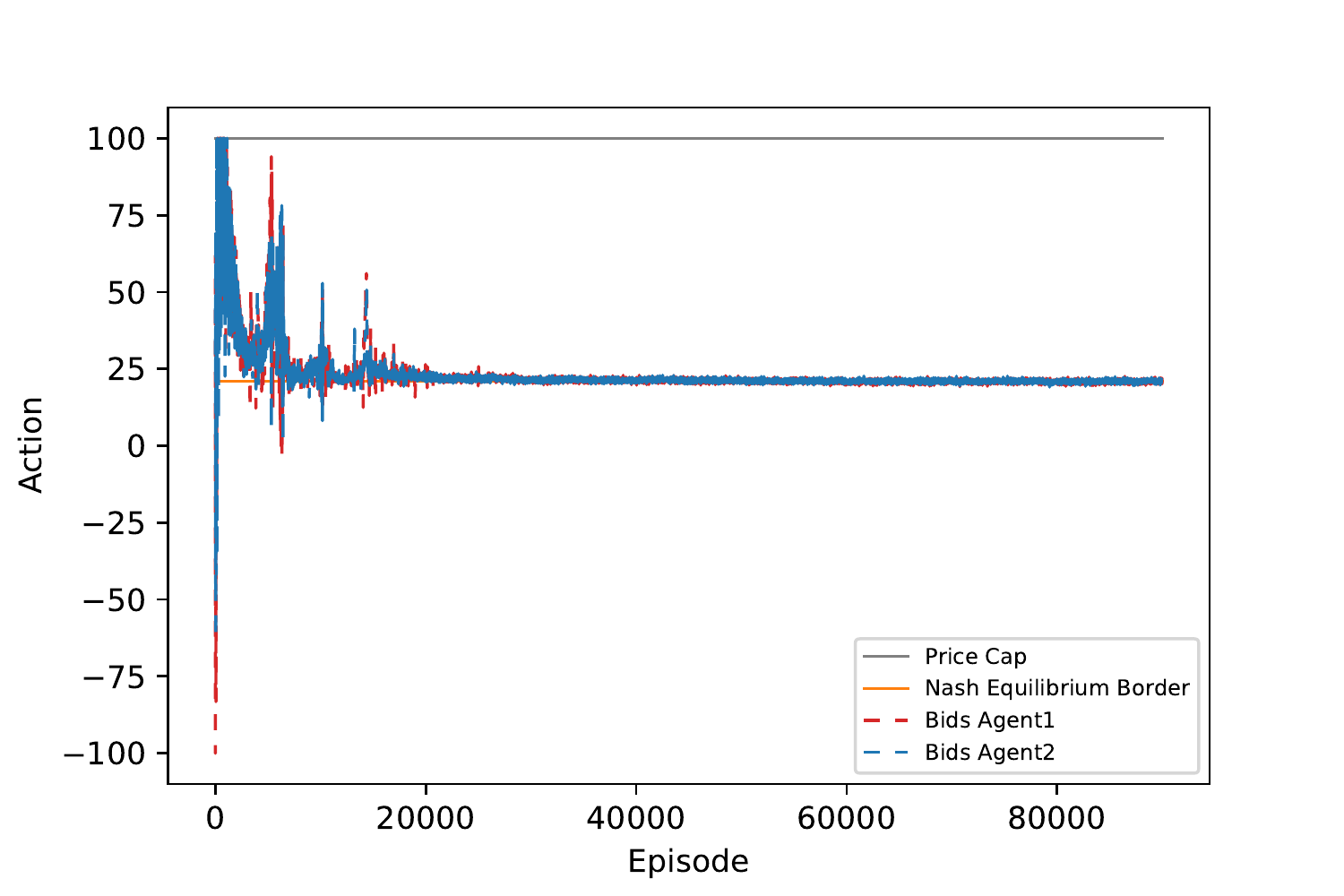}
  \caption{500 Round Replaybuffer}
\label{fig:500membuf}
\end{subfigure}
\floatfoot{\footnotesize \textit{Notes:} We showcase the influence of Memorybuffer size in an ``ultra-competitive'' Scenario, where $D \leq \overline{q}$. 
Each Panel depicts the quickest converging DDPG run out of 10 identical runs. 
Here, the convergence times are from top to bottom (a): above 100,000 Episodes, (b): around 50,000 Episodes and (c): around 20,000 Episodes, hence covergence times are \textit{decreasing with smaller replay buffer sizes}. }
\caption{DDPG-Convergence times of unconstrained Bertrand games depending on Replaybuffer size}
\label{fig:memorybuffertest}
\end{figure}

In our standard Benchmark scenario with demand $\overline{q} < D < 2\overline{q}$ converge could be reliably achieved with adequate choices of learning rates and normalization methods. In this case, a replay buffer that was comparably large with up to 50,000 turns memorized was sufficient to attain convergence.
It has to be said, that in the standard scenario 20,000 episodes was used as a standard runtime with algorithms frequently converging already around 5,000 episodes.

Classically, Bertrand duopolies are studied unconstrained and symmetric. This situation is well known to exhibit a unique equilibrium, where both agents bid their marginal costs. We find DDPG to generally converge to this solution but found that the configuration of the replay buffer may throttle the speed of convergence substantially. Even extremely long runtimes of up to 100,000 episodes did not converge fully until we reduced the size of the replay buffer. Indeed, in this situation the size of the replay buffer seems absolutely critical. Figure \ref{fig:memorybuffertest} Panel (a)--(c) depict the effect of memory buffer size on DDPG's convergence time.

Another, parameter that has been proven critical is the noises decay rate. This is less surprising since highly competitive equilibria require sufficient exploration to converge. The ``super-competitive'' scenario is such an highly competitive equilibrium. Consequently, the effect of the decay rate is clearly visible on the convergence of unconstrained Bertrand games.

\subsubsection{Replay Buffer Size}

We tested replay buffer sizes of 50,000 (ca 400 Batches), 5,000 (ca 40 Batches), 500 (ca 4 Batches), 256 (2 Batches) and 129 (1 Batch) rounds each.

Neither very high (50,000) nor very low choices (129,256) led to good results.\footnote{The choice of low memory buffer sizes are determined as batch size plus one and batch size times two.}
Too large memory buffers led to both agents only marginally undercutting each other with a steady but extremely slow decrease in offers over time. Smaller choices of memory buffer corresponded directly to fast decreases in offers towards the marginal cost equilibrium. However, too small memory buffers (129,256) quickly approached marginal costs but did not necessarily stay within equilibrium. Overall, buffer sizes of 500 converged quickly but yet remained stable. We want to stress that this trend held true both \textit{with and without the inclusion of past actions into the state-space}.

We explain this phenomena by the fact that large buffer sizes slow learning. For instance, in the $D\leq \overline{q}$ case, the equilibrium is only found through long periods of competition. However, initially both algorithms can succeed with relatively large offer prices. Big buffers tend to keep up old experiences for a relatively long time and result in the algorithm decreasing its offer prices slower than with small buffers, as they remember initial successes longer. In contrast, if the memory buffer is not big enough to retain the information that deviations from equilibrium are punished, the algorithm will be instable as it begins to explore again after convergence to equilibrium.
This indicates that replay buffers are indeed an integral part of DDPG and can not be simply left out. 
Nonetheless, the buffer size should be carefully limited. If convergence is possible, smaller memories are better.

\subsubsection{Noise \& Decay Rate}

As its name suggests, deep deterministic policy gradient
(DDPG) algorithms employ neural networks that deterministicly output the same values given the same inputs.
Outputs only change after training on unseen data.
However, novel data would never be explored by a determinstic neural netowrk.
Noise introduces randomness into the algorithm, in order to generate novel behavior.
Hence, the noise directly controls DDPG's exploration. 
In turn, the decay rate controls the fading out of the initially high noise over time that ensures an eventual switch from intial exploration to exploitation.
Hence, the decay rate is the most direct correspondence to $Q$-learnings exploitation-exploration parameter.

We depict the influence of the decay rate on convergence behavior in Figure \ref{fig:decayrate}. The figure shows relatively common behavior similar to many reinforcement algorithms.
This is not surprising since the interplay between exploration and exploitation is a shared characteristic of all RL algorithms.
Too fast decay rates, such as 0.9, lead to trivial behavior, while slow  decay rates such as 0.99999 leasto complete random behavior. In the intermediate regime, convergence is possible with the most-balanced choice of 0.999 leading to the best convergence time.

\begin{figure}[hp]
\begin{subfigure}{\textwidth}
  \centering
  \includegraphics[width=0.7\linewidth,height=0.28\textheight]{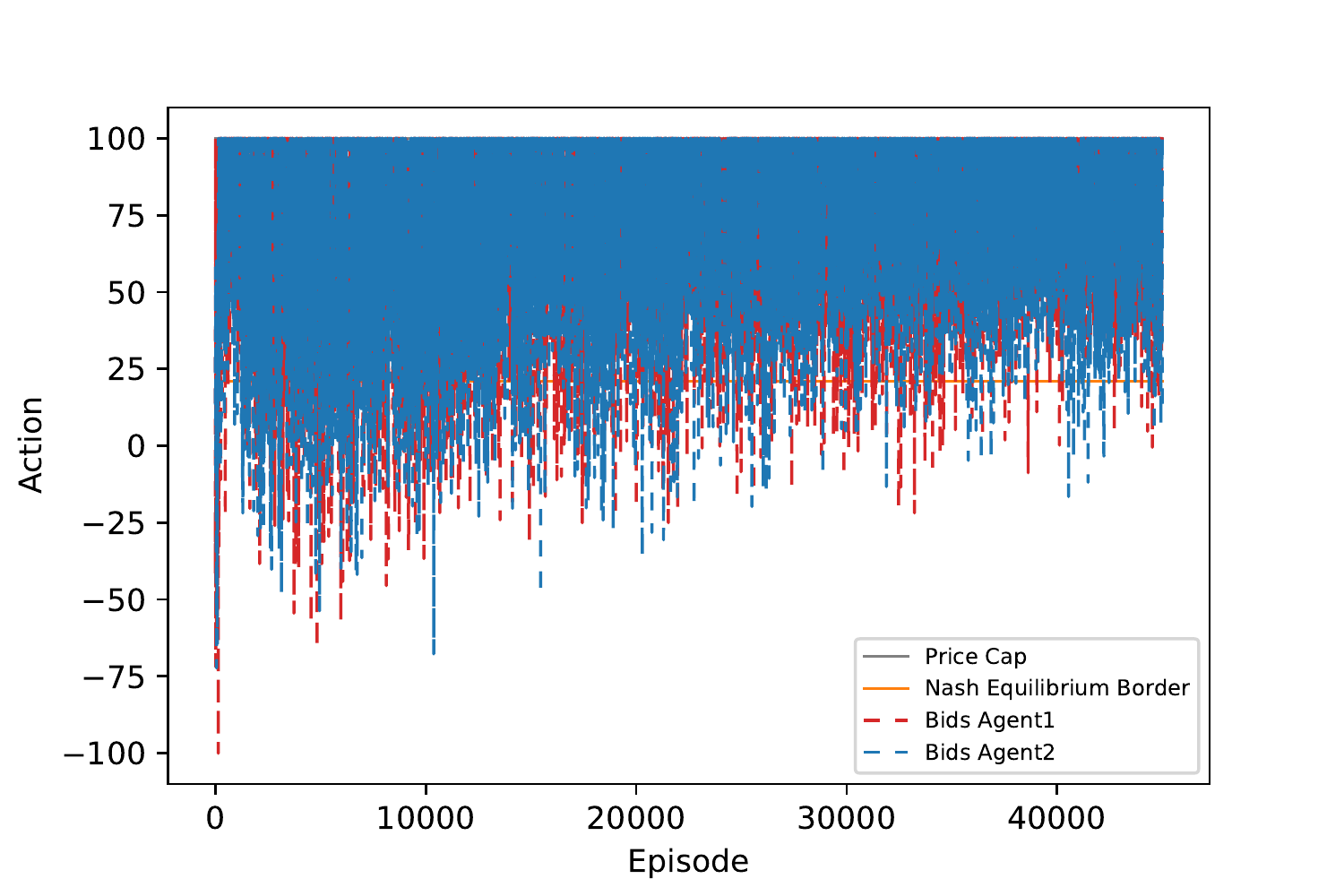}
  \caption{0.9999 Decay Rate}
\end{subfigure}
\begin{subfigure}{\textwidth}
  \centering
  \includegraphics[width=0.7\linewidth,height=0.28\textheight]{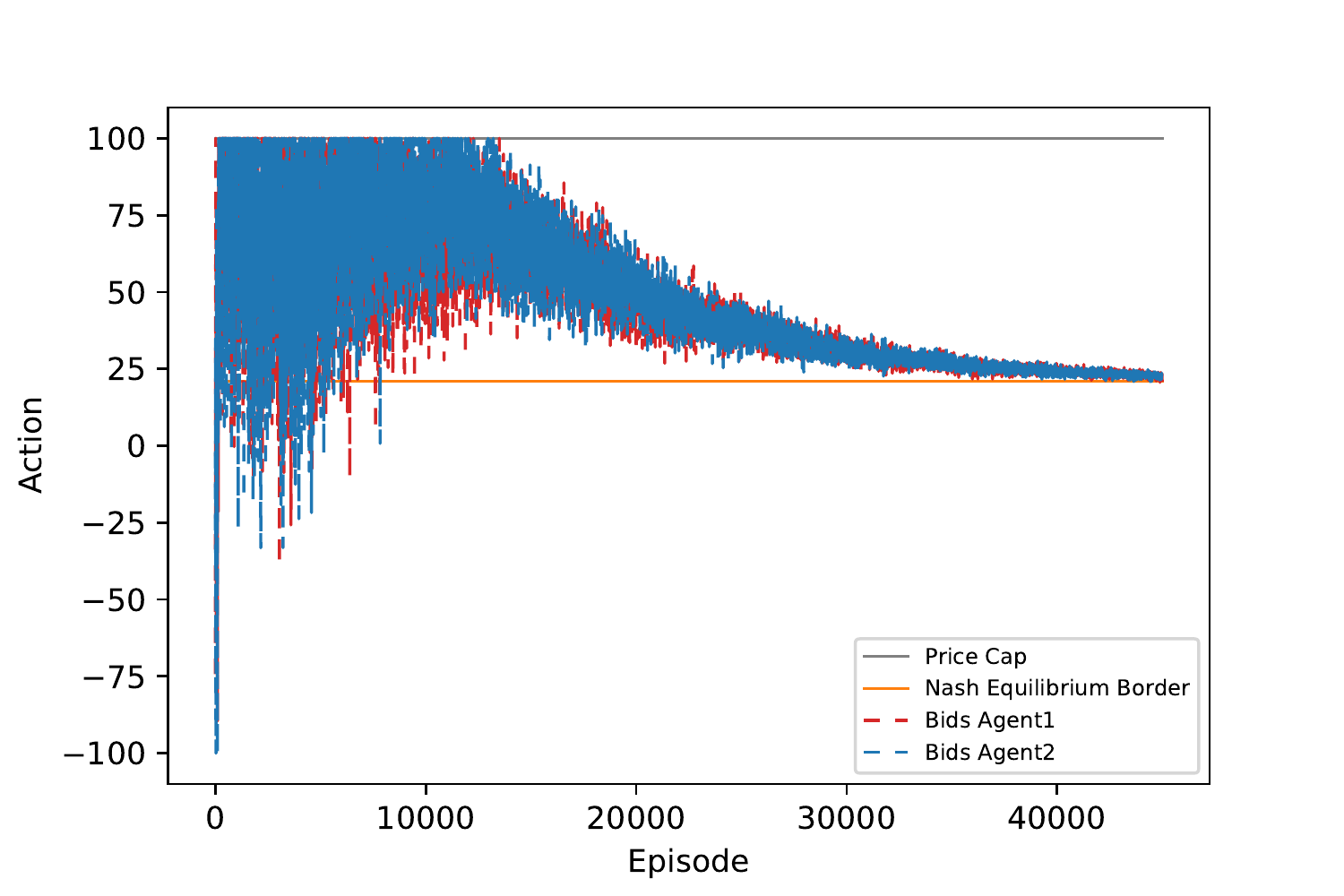}
  \caption{0.999 Decay Rate}
\end{subfigure}
\begin{subfigure}{\textwidth}
  \centering
  \includegraphics[width=0.7\linewidth,height=0.28\textheight]{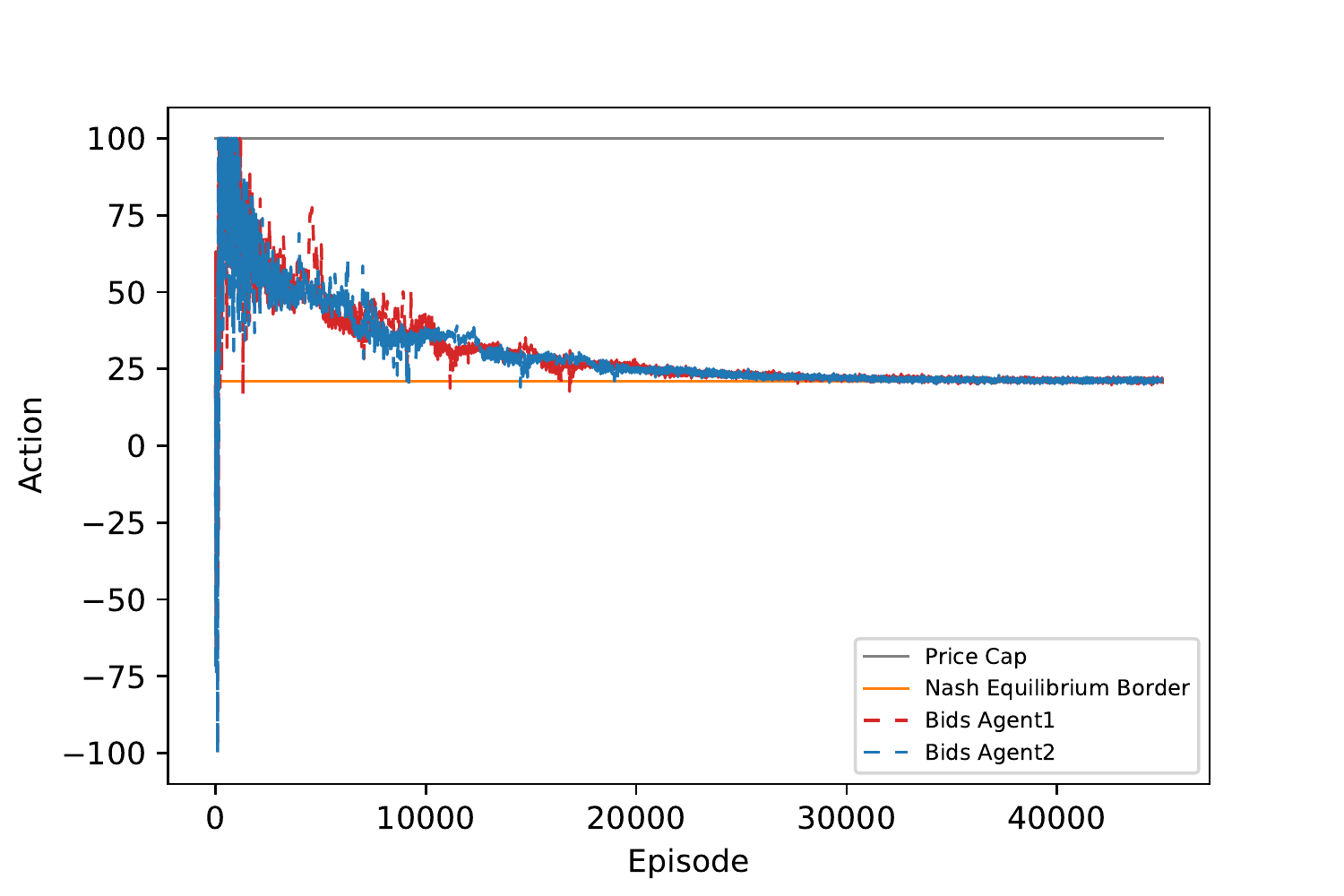}
  \caption{0.99 Decay Rate}
\end{subfigure}
  \floatfoot{\footnotesize The figure depicts identical DDPG runs with varying decay rates. The decay speed increases from top to bottom. Note that decay rates are multiplicative factors, hence larger number correspond to slow decay.. The depicted rates are
  (a) 0.9999, (b) 0.999 and (c) 0.99. Even faster decay rates do have a contraproductive effect on convergence times.}
\caption{DDPG-Convergence times of unconstrained Bertrand games depending on decay rate size}
\label{fig:decayrate}
\end{figure}

\section{Conclusion}
\label{Sec:conclusion}

Our main contribution is the \textit{adaption of the deep deterministic policy gradient (DDPG)} algorithm to a novel problem domain: \textit{strategic interaction in uniform price auctions}.\footnote{If players do not have capacity constraints these setting translates to price competition \`a la Bertrand.} 
The problem has been addressed by classical reinforcement learning methods, however we are to the best of our knowledge the first to apply modern deep reinforcement learning techniques to this particular problem domain.
We believe and argue that DDPG is a significant step beyond the usage of the frequently used $Q$-learning. 
The methodological novelty lies in the possibility of allowing fully continuous state and action spaces (as compared to $Q$-learning) and no need for any assumptions of a functional form parametrizing the strategy space of the agents (as compared to policy gradient learning).
We have argued that despite frequently being labelled ``model-free'' $Q$-learning's state-action space assumptions indeed require significant model assumptions to be justified.
Essentially, DDPG's continuity properties are another step-forward towards being truly model free. Nonetheless, we have pointed out the caveat, that certain parametric choices such as memory models and normalization methods do impact agent behaviors.

Furthermore, the problem is naturally a multi-agent problem and applying the initially single agent DDPG to multi-agent settings is in itself a timely endeavour, although there have been already several works pursuing this  new line of research. In summary, we believe to have demonstrated that modern learning algorithms like DDPG essentially remove the necessity to discretize strategic bidding state-action spaces at least in low feature spaces. Most likely, this result is generalizable to scenarios with numerous features, however this needs to be ascertained by future work.

We accompany our DDPG implementation with a benchmark framework based on its ability to consistently converge to Nash equilibria.
This is only possible through our employment of a scenario (see Section \ref{Sec:BenchmarkingScenario}) where we can analytically derive Nash equilibria. In our case, we consider a duopoly engaged in a uniform price auction of a capacity constrained homogeneous good. We are able to completely characterize the auctions equilibrium strategies. We have performed a number of DDPG market-simulations while varying the key learning parameters \textit{actor learning rater, critic learning rate, state space memory} and \textit{normalization technique}. Each parametrization has been run for 100 times and evaluated with respect to the percentage of strategies that converged to an equilibrium after 15,000 episodes of learning. This allows us to asses the impact of the aforementioned parameters on the learnability of the equilibria.

Despite considerable variance in the results between differing algorithm parametrizations there is a common trend: 
Well parametrized DDPG-simulation play equilibrium strategies almost always (more than 95\% of the runs).
Ill parametrized runs may not converge at all.
Nonetheless, our first key finding is: it is possible to reliable find equilibrium strategies with a properly tuned DDPG-algorithm.
Commonly, deep learning algorithms successes depend crucially on the correct parametrization. 
Hence, we give specific advise how to choose these parameters within Section~\ref{subsec: parameter variation}. 

Furthermore, we discuss the effects 
of \textit{explicit state-space memory}, \textit{replay buffer memory} and \textit{normalization method} 
encountered during our analysis. All showed significant impacts on the algorithms behavior. 
Furthermore, layer normalization clearly performed best in our benchmark if measured solely in terms of convergence reliability.

Including an \textit{explicit memory} of the last rounds actions as state information had counter-intutive results. Similarly, \textit{larger replay buffers} did not necessarily perform better. To the contrary, larger replay buffers led to slower convergence times. One might have expected memory to ``help'' the algorithm and thus converge more easily, but we found the opposite effect. Memoryless algorithms converged in a wider range of learning rates. Nonetheless, both memoryless and algorithms with memory reached convergence rates above 90\%, albeit the memoryless ones did so more robustly.

This certainly merits further investigation, however we conjecture two possible reasons for this behavior. First, the large state space of the memory might counteract any informational benefits. This would indicate that one should include state-space information only if one is certain that it is vital information.
Second, the algorithms with memory appear to compete more proactively. Possibly, awareness of the opponent is necessary to place competitive bids. It may be the case that competition seeking agents explore the strategy space more thoroughly and thus require longer equilibration times. In our opinion, this could be beneficial for market simulations or policy analysis, where one might hope to fully explore the strategy space.
Overall, this leads us to the assessment that with respect to algorithmic memory less seems to be more.

Furthermore, we have to stress the impact of \textit{normalization schemes} on the final behavior. Generally, unnormalized runs did perform well too, however the best convergence rates were encountered by the use of so-called layer normalization. Layer normalization tended to produce well-equilibrated, but also relatively static and non-competitive behavioral patterns. This stood in contrast to batch normalization.
Batch normalization showed a highly differentiated behavior.
While memoryless batch normalization led to the by far worst convergence rates, batch normalization with memory performed satisfyingly albeit worse than layer normalization with memory. However, it was the only method that profited from the inclusion of memory. this may be an indication that batch normalization is better suited to handle more complex state-space and we will pursue this matter in the future. Moreover, it showed an extremely pronounced tendency to compete, experiment, and explore the state-space. This may or may-not be desirable, but we certainly see significant behavioral assumptions entering through seemingly inconspicuous normalization methods. Hence, we advise to take this into account. 

In summary, we believe to have demonstrated that Nash equilbria in multi-agent uniform price auctions can be found by DDPG simulations. This makes it a promising tool to analyze auctions or to derive informative counter-factuals even in more general settings.

\bibliography{lit}

\end{document}